\definecolor{lightblue}{rgb}{0.5,0.5,1.0}
\definecolor{darkred}{rgb}{0.8,0,0}
\definecolor{darkgreen}{rgb}{0,0.5,0}
\definecolor{darkblue}{rgb}{0,0,0.5}
\title{Token Sliding on Split Graphs\footnote{Supported by JSPS and MAEDI
under the Japan-France Integrated Action Program (SAKURA) Project GRAPA
38593YJ, by FMJH program PGMO and EDF via project
2016-1760H/C16/1507 ``Stability versus Optimality in Dynamic Environment
Algorithmics'' and project ``ESIGMA'' (ANR-17-CE23-0010),
and by JSPS KAKENHI Grant Numbers JP18K11157, JP18K11168, JP18K11169, JP18H04091.}}
\author{Rémy Belmonte}{University of Electro-Communications, Chofu, Tokyo, 182-8585, Japan}{remybelmonte@gmail.com}{0000-0001-8043-5343}{}
\author{Eun Jung Kim}{Université Paris-Dauphine, PSL University, CNRS, LAMSADE, 75016, Paris, France}{eun-jung.kim@dauphine.fr}{}{}
\author{Michael Lampis}{Université Paris-Dauphine, PSL University, CNRS, LAMSADE, 75016, Paris, France}{michail.lampis@lamsade.dauphine.fr}{0000-0002-5791-0887}{}
\author{Valia Mitsou}{Université Paris-Diderot, IRIF, CNRS, 75205, Paris, France}{vmitsou@liris.cnrs.fr}{}{}
\author{Yota Otachi}{Kumamoto University, Kumamoto, 860-8555, Japan}{otachi@cs.kumamoto-u.ac.jp}{0000-0002-0087-853X}{}
\author{Florian Sikora}{Université Paris-Dauphine, PSL University, CNRS, LAMSADE, 75016, Paris, France}{florian.sikora@dauphine.fr}{}{}
\authorrunning{R. Belmonte, E.J. Kim, M. Lampis, V. Mitsou, Y. Otachi, F. Sikora}
\subjclass{\ccsdesc[500]{Mathematics of computing$\rightarrow$Graph algorithms}; \ccsdesc[500]{Theory of Computation $\rightarrow$ Design and Analysis of Algorithms $\rightarrow$ Parameterized Complexity and Exact Algorithms}}
\keywords{reconfiguration, independent set, split graph}
\begin{document}

\maketitle

\begin{abstract} We consider the complexity of the \textsc{Independent Set
Reconfiguration} problem under the Token Sliding rule. In this problem we are
given two independent sets of a graph and are asked if we can transform one to
the other by repeatedly exchanging a vertex that is currently in the set with
one of its neighbors, while maintaining the set independent. Our main result is
to show that this problem is PSPACE-complete on split graphs (and hence also on
chordal graphs), thus resolving an open problem in this area.

We then go on to consider the $c$-\textsc{Colorable Reconfiguration} problem
under the same rule, where the constraint is now to maintain the set
$c$-colorable at all times. As one may expect, a simple modification of our
reduction shows that this more general problem is PSPACE-complete for all fixed
$c\ge 1$ on chordal graphs. Somewhat surprisingly, we show that the same cannot
be said for split graphs: we give a polynomial time ($n^{O(c)}$) algorithm for
all fixed values of $c$,  except $c=1$, for which the problem is
PSPACE-complete.  We complement our algorithm with a lower bound showing that
$c$-\textsc{Colorable Reconfiguration} is W[2]-hard on split graphs
parameterized by $c$ and the length of the solution, as well as a tight
ETH-based lower bound for both parameters. 


\end{abstract}

\section{Introduction}

A reconfiguration problem is a problem of the following type: we are given an
instance of a decision problem, two feasible solutions $S,T$, and a local
modification rule. The question is whether $S$ can be transformed to $T$ by
repeated applications of the modification rule in a way that maintains the
solution feasible at all times. Due to their numerous applications,
reconfiguration problems have attracted much interest in the literature, and
reconfiguration versions of standard problems (such as \textsc{Satisfiability},
\textsc{Dominating Set}, and \textsc{Independent Set}) have been widely studied
(see the surveys \cite{vandenHeuvel13,Nishimura18} and the references therein).

Among reconfiguration problems on graphs, \textsc{Independent Set
Reconfiguration} is certainly the most well-studied. The complexity of this
problem depends heavily on the rule specifying the allowed reconfiguration
moves.  The main reconfiguration rules that have been studied for
\textsc{Independent Set Reconfiguration} are Token Addition \& Removal (TAR)
\cite{KaminskiMM12,MouawadN0SS17}, Token Jumping (TJ)
\cite{BonsmaKW14,BousquetMP17,ItoDHPSUU11,ItoKOSUY14,ItoKO14}, and Token
Sliding (TS)
\cite{BonamyB17,DemaineDFHIOOUY15,Fox-EpsteinHOU15,HearnD05,HoangU16,LokshtanovM18}.
In all rules, we are required to keep the current set independent at all times.
TAR allows us to add or remove any vertex in the current set, as long as the
set's size is always higher than a predetermined threshold.  TJ allows to
exchange any vertex in the set with any vertex outside it (thus keeping the
size of the set constant at all times).  Finally, under TS, we are allowed to
exchange a vertex in the current independent set with one of its neighbors,
that is, we are allowed to perform a TJ move only if the two involved vertices
are adjacent.    

The \textsc{Independent Set Reconfiguration} problem has been intensively
studied under all three rules.  Because the problem is PSPACE-complete in
general for all three rules \cite{KaminskiMM12}, this has motivated the study
of its complexity in restricted classes of graphs, with an emphasis on graphs
where \textsc{Independent Set} is polynomial-time solvable, such as chordal
graphs and bipartite graphs.  By now, many results of this type have been
discovered (see Table~\ref{tbl:summary} for a summary). 

Our first, and main, focus of this paper is to concentrate on a case of this
problem which has so far remained elusive, namely, the complexity of
\textsc{Independent Set Reconfiguration} on chordal graphs under the TS rule.
This case is of particular interest because it is one of the few cases where
the problem is known to be tractable under both TAR and TJ. Indeed,
Kami\'{n}ski, Medvedev, and Milani\v{c}~\cite{KaminskiMM12} showed that under
these two rules \textsc{Independent Set Reconfiguration} is polynomial-time
solvable on even-hole-free graphs, a class that contains chordal graphs. In the
same paper they explicitly asked as an open question if the same problem is
tractable on even-hole-free graphs under TS (\cite[Question 2]{KaminskiMM12}).

This question was then taken up by Bonamy and Bousquet~\cite{BonamyB17} who
made some progress by showing that \textsc{Independent Set Reconfiguration}
under TS is polynomial-time solvable on \emph{interval graphs}, an important
subclass of chordal graphs. They also gave some first evidence that it may be
hard to obtain a similarly positive result for chordal graphs by showing that a
related problem, the problem of determining if \emph{all} independent sets of
the same size can be transformed to each other under TS, is coNP-hard on split
graphs, another subclass of chordal graphs.  Note, however, that this is a
problem that is clearly distinct from the more common reconfiguration problem
(which asks if two \emph{specific} sets are reachable from each other), and
that the coNP-hardness is not tight, since the best known upper bound for this
problem is also PSPACE.

The complexity of \textsc{Independent Set Reconfiguration} under TS on split
and chordal graphs has thus remained as an open problem. Our first, and main,
contribution in this paper is to settle this problem by showing that the
problem is PSPACE-complete already on split graphs (Theorem
\ref{thm:split-TS}), and therefore also on chordal and even-hole-free graphs.

\begin{table}[tbp]
\centering
\caption{Complexity of \textsc{Independent Set Reconfiguration} on some graph classes.}
\label{tbl:summary}
\begin{tabular}{c|c|c}
& \multicolumn{2}{c}{\cellcolor{lightgray!50}\textsc{Independent Set Reconfiguration}} \\
& \cellcolor{lightgray!25}TS & \cellcolor{lightgray!25}TJ/TAR \\ \hline
\cellcolor{lightgray!50}perfect & \multicolumn{2}{c}{PSPACE-complete \cite{KaminskiMM12}} \\ \hline
\cellcolor{lightgray!50}even-hole-free & PSPACE-complete (Theorem~\ref{thm:split-TS}) & P \cite{KaminskiMM12}  \\ \hline
\cellcolor{lightgray!50}chordal & PSPACE-complete (Theorem~\ref{thm:split-TS}) & P (even-hole-free) \\ \hline
\cellcolor{lightgray!50}split & PSPACE-complete (Theorem~\ref{thm:split-TS}) & P (even-hole-free) \\ \hline
\cellcolor{lightgray!50}interval & P \cite{BonamyB17} & P (even-hole-free) \\ \hline
\cellcolor{lightgray!50}bipartite & PSPACE-complete \cite{LokshtanovM18} & NP-complete \cite{LokshtanovM18} \\ \hline
\end{tabular}
\end{table}

\subparagraph{$c$-\textsc{Colorable Reconfiguration}} A natural generalization
of \textsc{Independent Set Reconfiguration} was recently introduced in
\cite{ItoO18}: in $c$-\textsc{Colorable Reconfiguration} we are given a graph
$G=(V,E)$ and two sets $S,T\subseteq V$, both of which induce a $c$-colorable
graph. The question is whether $S$ can be transformed to $T$ (under any of the
previously mentioned rules) in a way that maintains a $c$-colorable graph at
all times. Clearly, $c=1$ is the case of \textsc{Independent Set
Reconfiguration}. It was shown in \cite{ItoO18} that this problem is already
PSPACE-complete on split graphs under all three rules, when $c$ is part of the
input. It was thus posed as an open question what is the complexity of the same
problem when $c$ is fixed. Some first results in this direction were given in
the form of an $n^{O(c)}$ (XP) algorithm that works for split graphs under the
TAR and TJ rules (but not TS). Motivated by this work, the second area of focus
of this paper is to investigate how the hardness of $1$-\textsc{Colorable
Reconfiguration} for split graphs established in Theorem \ref{thm:split-TS}
extends to larger, but fixed $c$.

Our first contribution in this direction is to show that, for chordal graphs,
$c$-\textsc{Colorable Reconfiguration} under TS is PSPACE-complete for any
fixed $c\ge 1$. This is, of course, not surprising, as the problem is
PSPACE-complete for $c=1$; indeed, the reduction we present in Theorem
\ref{thm:TS-split2} is a tweak of the construction of Theorem
\ref{thm:split-TS} that increases $c$.

What is perhaps more surprising is that we show (under standard assumptions)
that, even though Theorem \ref{thm:split-TS} establishes hardness for $c=1$ on
split graphs, a similar tweak cannot establish hardness for higher $c$ on the
same class for TS. Indeed, we provide an algorithm which solves TS
$c$-\textsc{Colorable Reconfiguration} in split graphs in time $n^{O(c)}$ for
any $c$ \emph{except} $c=1$. Thus, \textsc{Independent Set Reconfiguration}
turns out to be the \emph{only} hard case of $c$-\textsc{Colorable
Reconfiguration} for split graphs under TS.  Since the $n^{O(c)}$ algorithm of
\cite{ItoO18} for TAR/TJ reconfiguration of split graphs works for all fixed
$c$, it  thus seems that this anomalous behavior is peculiar to the Token
Sliding rule.

Finally, we address the natural question of whether one can improve this
$n^{O(c)}$ algorithm, by showing that the problem is W[2]-hard parameterized by
$c$ and the length of the solution $\ell$ for all three rules.  This is in a
sense doubly tight, since in addition to our algorithm and the algorithm of
\cite{ItoO18} which run in $n^{O(c)}$, it also matches the trivial
$n^{O(\ell)}$ algorithm which tries out all solutions of length $\ell$. More
strongly, under the ETH our reduction implies that the problem cannot be solved
in $n^{o(c+\ell)}$ meaning that these algorithms are in a sense ``optimal''.


\section{Definitions}

We use standard graph-theoretic terminology.  For a graph $G=(V,E)$ and a set
$S\subseteq V$ we use $G[S]$ to denote the graph induced by $S$. A graph is
chordal if it does not contain a $k$-vertex cycle $C_k$ as an induced subgraph for any $k>3$. A
graph is split if its vertex set can be partitioned into two sets $K,I$ such
that $K$ induces a clique and $I$ induces an independent set. It is a
well-known fact that split graphs are chordal, and it is easy to see that both
classes are closed under induced subgraphs. We use $\chi(G),\omega(G)$ to
denote the chromatic number and maximum clique size of a graph $G$
respectively. It is known that, because chordal graphs are perfect, if $G$ is
chordal then $\chi(G)=\omega(G)$ \cite{west2001introduction}. We also recall
that a graph $G$ is chordal if and only if every induced subgraph of $G$
contains a simplicial vertex, where a vertex is simplicial if its neighborhood
is a clique.


Let $G=(V,E)$ be a graph and $c\ge 1$ an integer. Given two sets $S,T\subseteq
V$ such that $\chi(G[S]),\chi(G[T])\le c$, we say that $S$ can be
$c$-transformed into $T$ by one token sliding (TS) move if $|T|=|S|$ and there
exist $u,v\in V$ with $(u,v)\in E$ such that $\{u\} =  T\setminus S$, $\{v\} =
S\setminus T$. One easy way to think of TS moves is by picturing the elements
of the current set $S$ as tokens placed on the vertices of the graph, and a
single move as ``sliding'' a token along an edge (hence the name Token
Sliding).

We say that $S$ is $c$-reachable from $T$, or that $S$ can be $c$-transformed
into $T$, by a sequence of TS moves if there exists a sequence of sets
$I_0,I_1,\ldots,I_{\ell}$, with $I_0=S, I_{\ell}=T$ and for each
$i\in\{0,\ldots,\ell-1\}$, $\chi(G[I_i])\le c$ and $I_{i}$ can be
$c$-transformed into $I_{i+1}$ by one TS move. We will simply say that $S$ can
be transformed into $T$ or that $S$ is reachable from $T$, if $S,T$ are
independent sets and $S$ can be $1$-transformed into $T$.  We focus on the
following problems.

\begin{definition}

In $c$-\textsc{Colorable Reconfiguration} we are given  a graph $G=(V,E)$ and
two sets $S,T\subseteq V$ with $|S|=|T|$ and $\chi(G[S]), \chi(G[T])\le c$.  We
are asked if $S$ can be $c$-transformed into $T$. \textsc{Independent Set
Reconfiguration} is the special case of $c$-\textsc{Colorable Reconfiguration}
where $c=1$.

\end{definition}


In addition to TS moves we will consider Token Jumping (TJ) and Token Addition
\& Removal (TAR) moves. A TJ move is the same as a TS move except that the two
vertices $u,v$ are not required to be adjacent. Two $c$-colorable sets $S,T$
are reachable with one TAR move with threshold $k$ if $|S|,|T|\ge k$ and
$|(S\setminus T)\cup(T\setminus S)|=1$. We note here that, because our main
focus in this paper is the TS rule, whenever we refer to a transformation
without explicitly specifying under which rule this transformation is performed
the reader may assume that we are referring to the TS rule.


We assume that the reader is familiar with basic complexity notions such as the
class PSPACE \cite{papad}, as well as basic notions in parameterized
complexity, such as the class W[2] (see e.g.~\cite{CyganFKLMPPS15}).  In
Theorem \ref{thm:split-TS} we will perform a reduction from the PSPACE-complete
NCL (non-deterministic constraint logic) reconfiguration problem introduced by
Demaine and Hearn in \cite{HearnD05} (see also \cite{Hearn06,HD09}). Let us
recall this problem. In the NCL reconfiguration problem we are given as input a
graph $G=(V,E)$, whose edge set is partitioned into two sets, $R$ (red) and $B$
(blue). We consider blue edges as edges of weight $2$ and red edges as edges of
weight $1$. A valid configuration of $G$ is an orientation of all the edges
with the property that all vertices have weighted in-degree at least $2$.  In
the NCL configuration-to-configuration problem we are given two valid
orientations of $G$, $D$ and $D'$, and are asked if there is a sequence of
valid orientations $D_0, D_1, \ldots, D_t$ such that $D=D_0, D'=D_t$ and for
all $i\in\{0,\ldots,t-1\}$ we have that $D_i, D_{i+1}$ agree on all edges
except one. We recall the following theorem:

\begin{theorem}[Corollary 6 of \cite{HearnD05}]\label{thm:ncl} The NCL
configuration-to-configuration problem is PSPACE-complete even if all vertices
of $G$ have degree exactly three and, moreover, even if all vertices belong in
one of the following two types: OR vertices, which are vertices incident on
exactly three blue edges and no red edges; and AND vertices which are vertices
incident on two red edges and one blue edge.  \end{theorem}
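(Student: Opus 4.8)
The plan is to show the NCL configuration-to-configuration problem is in PSPACE and is PSPACE-hard already under the stated restrictions (every vertex of degree exactly three; every vertex an AND vertex or an OR vertex). For membership, note that a configuration is an orientation of $E$, encodable in $O(|E|)$ bits, and that checking validity — every vertex has weighted in-degree at least $2$ — takes polynomial time; the question is reachability in the implicitly defined graph whose nodes are the valid orientations and whose arcs join orientations that differ on exactly one edge. This graph has at most $2^{|E|}$ nodes, so one can decide reachability by nondeterministically guessing a single reorienting move at a time and verifying validity, using only polynomial space; by Savitch's theorem the problem is therefore in PSPACE.

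For hardness I would reduce from the truth problem for quantified Boolean formulas. Given $\Phi = Q_1 x_1 \cdots Q_n x_n\, \varphi$ with $\varphi$ in CNF, I would construct a constraint graph $G_\Phi$ together with two designated valid orientations $D_{\mathrm{start}}$ and $D_{\mathrm{target}}$, built only from the two allowed primitives: the AND vertex (one blue/weight-$2$ edge and two red/weight-$1$ edges, whose blue edge may point outward only when both red edges point inward) and the OR vertex (three blue edges, of which at least one must always point inward). From these primitives I would assemble the standard derived gadgets of Hearn and Demaine~\cite{HearnD05} — \emph{wires} whose orientation carries one bit of signal, \emph{fan-out} gadgets that duplicate a signal, a \emph{latch} (choice) gadget with two stable internal states representing the two truth values of a variable, and \emph{clause} gadgets realized as trees of OR vertices whose root can be ``activated'' exactly when some literal feeding it is set true. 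Routing each variable gadget's literal outputs through fan-outs into the clauses that contain that literal, and then combining all clause outputs through a tree of AND vertices, yields a single ``formula satisfied'' wire that can be activated precisely when the current truth assignment satisfies $\varphi$.

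The heart of the reduction is wiring the variable gadgets in prefix order so that a reconfiguration from $D_{\mathrm{start}}$ to $D_{\mathrm{target}}$ simulates the evaluation game of $\Phi$: an existential variable's latch may be flipped whenever the ambient constraints allow, while a universal variable carries a gadget that, in order to let ``progress'' flow past it, forces the reconfiguration to pass through configurations witnessing \emph{both} of its truth values, each time with the sub-formula below already certified satisfied. I would then prove, by induction on $n$ from the innermost quantifier outward, that $G_\Phi$ is reconfigurable from $D_{\mathrm{start}}$ to $D_{\mathrm{target}}$ if and only if $\Phi$ is true: the base case is the clause/AND network computing $\varphi$, and the inductive step is a case split on $Q_i \in \{\exists, \forall\}$ invoking, respectively, the flip-freedom or the forced double-traversal property of the $i$-th variable gadget.

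Finally I would impose the stated restrictions. The graph $G_\Phi$ may a priori contain vertices of degree greater than three and vertices whose in-weight constraint is neither the AND nor the OR constraint; each is replaced locally by a fixed small sub-network built from $O(1)$ AND and OR vertices — a degree-$d$ vertex by a balanced binary tree of constant-size ``merge'' subgadgets fusing $d$ incident wires into one while preserving its in-weight requirement, and any other constraint vertex by a suitable fixed sub-network — after which every AND vertex has exactly two red and one blue incident edges, every OR vertex exactly three blue, and every vertex has degree exactly three. Each replacement is polynomial-size and does not change the reconfiguration behavior on its boundary wires, so composing everything gives a polynomial-time reduction and the restricted problem is PSPACE-complete. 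I expect the main obstacle to be the correctness of the quantifier chaining — in particular designing the universal-variable gadget so that it provably \emph{forces} (rather than merely permits) a traversal of both truth values, and pushing through the induction equating reachability in $G_\Phi$ with the truth of $\Phi$; by comparison, the verification of the individual composite gadgets and of the degree/type-reduction step is routine.
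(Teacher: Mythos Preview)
The paper does not prove this theorem at all: it is stated as a citation (``Corollary 6 of \cite{HearnD05}'') and used as a black box to seed the reduction in Section~\ref{sec:PSPACE-split}. So there is no ``paper's own proof'' to compare against; any comparison has to be against the original Hearn--Demaine argument.

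Your proposal is a reasonable high-level outline of that original argument: PSPACE membership via Savitch, hardness via QBF, simulated by AND/OR primitives assembled into wires, latches, fan-outs, clause trees, and quantifier gadgets, followed by local rewriting to enforce degree three and the AND/OR typing. That is indeed the route taken in \cite{HearnD05,HD09}. However, as it stands the proposal is a plan rather than a proof: the objects doing the real work---the latch, the existential and especially the universal-quantifier gadget that \emph{forces} both truth values---are named but not constructed, and you yourself flag the universal gadget and the induction over quantifiers as the main obstacle. Hearn and Demaine's proof does not build a fresh QBF machine directly out of AND/OR vertices; it first establishes PSPACE-hardness of general Constraint Logic and then shows that every constraint vertex can be simulated by a constant-size AND/OR network, so the ``degree/type reduction'' step you put last is in fact the crux of the restricted result and requires concrete gadget verifications that you have not supplied. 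Without those constructions and their correctness proofs, the proposal does not yet constitute a proof of the theorem.
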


\section{Token Sliding on Split Graphs is PSPACE-complete}\label{sec:PSPACE-split}

The main result of this section is that \textsc{Independent Set
Reconfiguration} is PSPACE-complete under the TS rule when restricted to split
graphs. 

\subsubsection*{Overview of the proof} Our proof is a reduction from the NCL
(non-deterministic constraint logic) reconfiguration problem of Theorem
\ref{thm:ncl}.  The first step of our proof is a relatively straightforward
reduction from the NCL reconfiguration problem to token sliding on split
graphs. Its main idea is roughly as follows: for each edge $e=(u,v)$ of the
original graph we construct two selection vertices $e_u, e_v$ in the
independent set of our split graph.  The idea is that at each point exactly one
of the two will contain a token (i.e. will belong in the current independent
set), hence our independent set will in a natural way represent an orientation
of the original graph. In order to allow a single reconfiguration step to take
place we add for each pair of selection vertices $e_u,e_v$ one or two ``gate''
vertices (depending on the color of $e$), which are common neighbors of
$e_u,e_v$ and belong in the clique. The idea is that a single re-orientation
step would, for example, take a token from $e_u$, slide it to a gate vertex
connected to the pair $e_u,e_v$, and then slide it to $e_v$: this sequence
would represent re-orienting $e$ from $u$ to $v$.  In order to simulate the
in-degree constraint we add edges between each selection vertex $e_u$ and gate
vertices corresponding to edges incident on \emph{the other} endpoint of $e$,
since keeping a token on $e_u$ represents an orientation of $e$ towards $u$,
which makes it harder to re-orient the edges incident on the other endpoint of
$e$.

The above sketch captures the basic idea of our reduction, except for one
significant obstacle. The correspondence between orientations and independent
sets is only valid if we can guarantee that no intermediate independent set
will ``cheat'' by, for example, placing tokens on both $e_u$ and $e_v$.  Since
we have added edges from $e_u,e_v$ to gate vertices that correspond to other
edges (in order to simulate the interaction between edges in the NCL instance),
nothing prevents a reconfiguration solution from using these edges to slide a
token from one selection pair to another.  The main problem thus becomes
enforcing consistency, or in other words forcing the solution sequence to only
use the appropriate gate vertices to slide tokens as intended.  This is handled
in the second step of our reduction which, given the split graph construction
sketched above, makes a large number of copies and connects them appropriately
in a way that the only feasible token sliding solutions are indeed those that
correspond to valid orientations of the original graph.

In the remainder of this section we use the following notation: $G=(V,E)$,
where $E=R\cup B$, is the graph supplied with the initial NCL reconfiguration
instance and $D,D'$ are the initial and target orientations; $G_b=(V_b,E_b)$ is
the ``basic'' split graph of our construction in the first step and $S,T$ the
independent sets of $G_b$ for which we need to decide reachability; and
$G_f=(V_f,E_f)$ is the split graph of our final token sliding instance with
$S_f,T_f$ being its corresponding independent sets.

Before we proceed, let us first slightly edit our given NCL reconfiguration
instance. We will now allow some vertices to have degree two and call these
vertices COPY vertices. Using these we can force the OR vertices to become an
independent set.

\begin{lemma}\label{lem:mycsl}

NCL reconfiguration remains PSPACE-complete on graphs where (i) all vertices
are either AND vertices (two incident red edges, one incident blue edge), OR
vertices (three incident blue edges), or COPY vertices (two incident blue
edges) (ii) every blue edge is incident on exactly one COPY vertex.

\end{lemma}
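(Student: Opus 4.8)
The plan is to reduce from the restricted NCL reconfiguration problem of Theorem~\ref{thm:ncl}. Given an NCL instance $(G,D,D')$ whose vertices are all AND vertices (two red, one blue) or OR vertices (three blue), I would build $G'$ by subdividing every blue edge exactly once, declaring each new degree-two vertex a COPY vertex and colouring blue the two edges that replace each subdivided blue edge, while leaving red edges untouched. In $G'$ every blue edge then joins an original vertex to a COPY vertex, and (together with the fact that red edges are unchanged, so AND vertices stay AND vertices) this gives conditions (i) and (ii); moreover each OR vertex now has all three of its blue edges going to COPY vertices, so the OR vertices form an independent set, as promised. For an orientation $D$ of $G$ let $\widehat D$ denote its \emph{canonical lift} to $G'$: for each blue edge $e=uv$ with subdivision vertex $w_e$, if $D$ orients $e$ as $u\to v$ then $\widehat D$ orients the path as $u\to w_e\to v$. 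A routine check shows that a canonical orientation of $G'$ is valid iff the corresponding orientation of $G$ is valid, since then each $w_e$ has weighted in-degree exactly $2$ and each original vertex has the same weighted in-degree in both graphs. The reduction maps $(G,D,D')$ to $(G',\widehat D,\widehat{D'})$; membership in PSPACE is clear, so what remains is to prove the equivalence of the two instances.

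For the forward direction I would lift a valid reconfiguration sequence $D=D_0,\dots,D_s=D'$ of $G$ to one of $G'$ step by step. A step flipping a red edge is copied verbatim. A step going from $D_j$ to $D_{j+1}$ that flips a blue edge $e=uv$, say from $u\to v$ to $v\to u$, is simulated by two steps: first reorient $w_ev$, reaching the orientation in which both $uw_e$ and $vw_e$ point into $w_e$; then reorient $uw_e$, reaching $\widehat{D_{j+1}}$. The only point to verify is that the intermediate orientation is valid: it differs from $\widehat{D_j}$ only in that $w_e$ gains weight $2$ (harmless) and $v$ loses weight $2$; but since the NCL step flips $e$ away from $v$, vertex $v$ had weighted in-degree at least $4$ in $D_j$, hence at least $2$ afterwards. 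Both steps are genuine NCL moves, so we obtain a valid sequence from $\widehat D$ to $\widehat{D'}$.

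The backward direction is the heart of the matter, and there I would use a ``freezing'' map that sends valid orientations of $G'$ to valid orientations of $G$ and behaves well even on the \emph{non-canonical} valid orientations --- those in which some $w_e$ is a sink, meaning both its edges point into it (note $w_e$ can never be a source in a valid orientation). Given a valid reconfiguration sequence $\widehat D=O_0,\dots,O_t=\widehat{D'}$ of $G'$, I would define orientations $\rho_0,\dots,\rho_t$ of $G$ as follows: copy the orientation of each red edge from $O_i$; and for each blue edge $e$, take the orientation dictated by the state of $w_e$ in $O_i$ if $w_e$ is not a sink there, and otherwise set $\rho_i(e)=\rho_{i-1}(e)$ (well defined, since a sink at $w_e$ is always entered from a non-sink state and $O_0$ is canonical). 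Then: (a) each $\rho_i$ is a valid orientation of $G$, because the weighted in-degree of every original vertex $x$ under $\rho_i$ is at least its weighted in-degree in $O_i$ --- the two counts agree term by term except that a blue edge $e=xy$ whose $w_e$ is a sink in $O_i$ contributes $0$ to $x$ in $O_i$ but possibly $2$ to $x$ under $\rho_i$ --- hence at least $2$; (b) $\rho_i$ and $\rho_{i+1}$ differ in at most one edge, since a red flip moves only that red edge, and a flip at some $w_e$ affects $\rho$ only on $e$, and then only when $w_e$ passes from a sink to the non-sink state on the side opposite to where $\rho$ was frozen, in which case $e$ is simply reoriented; and (c) $\rho_0=D$ and $\rho_t=D'$ because $O_0$ and $O_t$ are canonical. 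Deleting repeats turns $\rho_0,\dots,\rho_t$ into a valid NCL reconfiguration of $G$ from $D$ to $D'$. I expect the main obstacle to be exactly this backward direction: the naive projection of a valid orientation of $G'$ need not be valid in $G$, since a sink at $w_e$ starves both $u$ and $v$, and it is the freezing rule --- pretending the weight-$2$ resource still sits where it was last committed, which can only \emph{raise} weighted in-degrees --- that repairs this.
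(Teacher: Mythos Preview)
Your proof is correct and uses exactly the same construction as the paper: subdivide every blue edge once and declare the new degree-two vertices COPY vertices. The paper's proof simply asserts that ``this transformation does not change the type of any original vertex or the answer to the reconfiguration problem,'' whereas you carefully spell out both directions, including the freezing map that handles the one genuine subtlety (a valid orientation of $G'$ with a sink at some $w_e$ does not naively project to an orientation of $G$).
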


\begin{proof}

For every blue edge $e=(u,v)\in B$ in the original graph we delete this edge
from the graph, introduce a new COPY vertex $w$, and connect $w$ to $u,v$ with
blue edges. It is not hard to see that this transformation does not change the
type of any original vertex or the answer to the reconfiguration problem.
\end{proof}

\subsubsection*{First Step of the Construction} 

We assume (Lemma \ref{lem:mycsl}) that in the given graph $G$ we have three
types of vertices (AND, OR, COPY) and that each blue edge is incident on one
COPY vertex.  Let us now describe the construction of $G_b$.

\begin{enumerate}

\item For each $e=(u,v)\in R$ we construct two selector vertices $e_u,e_v$ and
one gate vertex $g_e$.

\item For each $e=(u,v)\in B$ we construct two selector vertices $e_u,e_v$ and
two gate vertices $g_{e,1}, g_{e,2}$.

\item For each edge $e=(u,v)\in R$ we connect $g_e$ to both $e_u,e_v$. For each
edge $e=(u,v)\in B$ we connect both $g_{e,1},g_{e,2}$ to both $e_u,e_v$. We
call the edges added in this step gate edges.

\item For each AND vertex $u$, such that $e=(u,v_1)\in B$ and $f=(u,v_2)\in R,\
h= (u,v_3)\in R$ we add the following edges: $(e_{v_1},g_f), (e_{v_1},g_h),
(f_{v_2}, g_{e,1}), (f_{v_2}, g_{e,2}), (h_{v_3}, g_{e,1}), (h_{v_3}, g_{e,2})$
(see Figure \ref{fig:ANDOR}). In other words, for each edge involved in this part we
connect the selector which represents its other endpoint (not $u$) to the gate
vertices of edges that should be unmovable if this edge is not oriented towards
$u$.

\item For each OR vertex $u$ such that $e=(u,v_1), f=(u,v_2), h=(u,v_3)\in B$
we add the following edges: $(e_{v_1}, g_{f,1}), (e_{v_1}, g_{h,1}), (e_{v_2},
g_{e,1}), (e_{v_2}, g_{h,2}), (e_{v_3}, g_{e,2}), (e_{v_3}, g_{f,2})$. In other
words, we connect the selector vertex for each $v_i$ to a distinct gate of the
edges $(u,v_j), (u,v_k)$, for $i,j,k$ distinct. Informally, this makes sure
that if two of the edges are oriented away from $u$ the third edge is stuck,
but if at most one is oriented away from $u$ the other edges have a free gate.

\item For each COPY vertex $u$ such that $e=(u,v_1), f=(u,v_2)\in B$ we add the
following edges: $(e_{v_1}, g_{f,1}), (e_{v_1}, g_{f,2}), (f_{v_2}, g_{e,1}),
(f_{v_2}, g_{e_2})$. In other words, we connect the selector vertex for $v_1$
in a way that blocks the movement of the token from $f_u$, and similarly for
$v_2$.

\item We connect all gate vertices into a clique to obtain a split graph. Note
that the remaining vertices (that is, the selector vertices $e_v$) form an
independent set.

\end{enumerate}

We now construct two independent sets $S,T$ of $G_b$ in the natural way: given
an orientation $D$, for each $e=(u,v)$ we place $e_u$ in $S$ if and only if $D$
orients $e$ towards $u$; we construct $T$ from $D'$ in the same way. This
completes the basic construction.

Before proceeding, let us make some basic observations regarding the
neighborhoods of gate vertices of the graph $G_b$. We have the following:

\begin{itemize}

\item If $e=(u,v)\in R$, let $u',v'$ be vertices of $G$ such that $f=(u,u')\in
B$, $h=(v,v')\in B$ (that is, $u',v'$ are the second endpoints of the blue
edges incident on $u,v$). We have that $N(g_e) = \{ e_u,e_v, f_{u'}, h_{v'}\}$.

\item If $e=(u,v)\in B$, $u$ is a COPY vertex and $v$ is an AND vertex, let
$f=(u,u')\in B$ be the other edge incident on $u$, and $h=(v,v'),
\ell=(v,v'')\in R$ be the other two edges incident on $v$. Then $N(g_{e,1})=
N(g_{e,2}) = \{ e_u, e_v, f_{u'}, h_{v'}, \ell_{v''}\} $.

\item If $e=(u,v)\in B$, $u$ is a COPY vertex and $v$ is an OR vertex, let
$f=(u,u')\in B$ be the other edge incident on $u$, and $h=(v,v'),
\ell=(v,v'')\in B$ be the other two edges incident on $v$. Then one of the
vertices $g_{e,1}, g_{e,2}$ has neighbors $\{e_u, e_v, f_{u'}, h_{v'}\}$ and
the other has neighbors $\{e_u, e_v, f_{u'}, \ell_{v''}\}$.

\end{itemize}

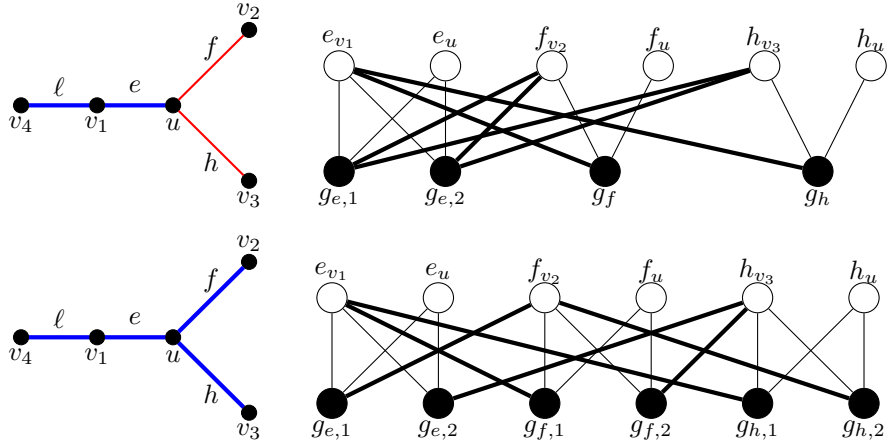
\begin{figure}[ht]
\begin{center}

\begin{tabular}{lr}

\begin{tikzpicture}[scale=0.5]
\tikzstyle{sommet}=[circle, draw, fill=black, inner sep=1pt, inner sep=4pt,
minimum size=0.1cm, scale=0.5]
\tikzstyle{redline}=[red, thick]
\tikzstyle{blueline}=[blue, ultra thick]

\node[sommet] (u) at (0,0) {}; 
\node[sommet] (v) at (2,0) {};
\node[sommet] (up) at (-2,0) {};
\node[sommet] (v1) at (4,2) {};
\node[sommet] (v2) at (4,-2) {};

\draw[redline] (v) -- (v1);
\draw[redline] (v) -- (v2);
\draw[blueline] (u) -- (v);
\draw[blueline] (u) -- (up);

\node () at (1,0.5) {$e$};
\node () at (3,1.5) {$f$};
\node () at (3,-1.5) {$h$};
\node () at (-1,0.5) {$\ell$};
\node () at (2,-0.5) {$u$};
\node () at (0,-0.5) {$v_1$};
\node () at (4,2.5) {$v_2$};
\node () at (4,-2.5) {$v_3$};
\node () at (-2,-0.5) {$v_4$};

\end{tikzpicture}

&

\begin{tikzpicture}[scale=0.7]
\tikzstyle{clique}=[circle, draw, fill=black, inner sep=1pt, inner sep=4pt, minimum size=0.1cm]
\tikzstyle{ind}=[circle, draw, inner sep=1pt, inner sep=4pt, minimum size=0.1cm]
\tikzstyle{conflict}=[ultra thick]

\node[clique] (gf) at (0,0) {};
\node[ind] (fu) at (1,2) {};
\node[ind] (fv2) at (-1,2) {};
\node () at (0,-0.5) {$g_f$};
\node () at (1,2.5) {$f_u$};
\node () at (-1,2.5) {$f_{v_2}$};

\draw (gf) -- (fu);
\draw (gf) -- (fv2);

\node[clique] (gh) at (4,0) {};
\node[ind] (hu) at (5,2) {};
\node[ind] (hv3) at (3,2) {};
\node () at (4,-0.5) {$g_h$};
\node () at (5,2.5) {$h_u$};
\node () at (3,2.5) {$h_{v_3}$};

\draw (gh) -- (hu);
\draw (gh) -- (hv3);

\node[clique] (ge) at (-5,0) {};
\node[clique] (ge2) at (-3,0) {};
\node[ind] (eu) at (-3,2) {};
\node[ind] (ev1) at (-5,2) {};
\node () at (-5,-0.5) {$g_{e,1}$};
\node () at (-3,-0.5) {$g_{e,2}$};
\node () at (-3,2.5) {$e_u$};
\node () at (-5,2.5) {$e_{v_1}$};

\draw (ge) -- (eu);
\draw (ge) -- (ev1);
\draw (ge2) -- (eu);
\draw (ge2) -- (ev1);

\draw[conflict] (ev1) -- (gf);
\draw[conflict] (ev1) -- (gh);

\draw[conflict] (fv2) -- (ge);
\draw[conflict] (fv2) -- (ge2);

\draw[conflict] (hv3) -- (ge);
\draw[conflict] (hv3) -- (ge2);

\end{tikzpicture}

\\

\begin{tikzpicture}[scale=0.5]
\tikzstyle{sommet}=[circle, draw, fill=black, inner sep=1pt, inner sep=4pt,
minimum size=0.1cm, scale=0.5]
\tikzstyle{redline}=[red, thick]
\tikzstyle{blueline}=[blue, ultra thick]

\node[sommet] (u) at (0,0) {}; 
\node[sommet] (v) at (2,0) {};
\node[sommet] (up) at (-2,0) {};
\node[sommet] (v1) at (4,2) {};
\node[sommet] (v2) at (4,-2) {};

\draw[blueline] (v) -- (v1);
\draw[blueline] (v) -- (v2);
\draw[blueline] (u) -- (v);
\draw[blueline] (u) -- (up);

\node () at (1,0.5) {$e$};
\node () at (3,1.5) {$f$};
\node () at (3,-1.5) {$h$};
\node () at (-1,0.5) {$\ell$};
\node () at (2,-0.5) {$u$};
\node () at (0,-0.5) {$v_1$};
\node () at (4,2.5) {$v_2$};
\node () at (4,-2.5) {$v_3$};
\node () at (-2,-0.5) {$v_4$};

\end{tikzpicture}

&

\begin{tikzpicture}[scale=0.7]
\tikzstyle{clique}=[circle, draw, fill=black, inner sep=1pt, inner sep=4pt, minimum size=0.1cm]
\tikzstyle{ind}=[circle, draw, inner sep=1pt, inner sep=4pt, minimum size=0.1cm]
\tikzstyle{conflict}=[ultra thick]

\node[clique] (gf) at (-1,0) {};
\node[clique] (gf2) at (1,0) {};
\node[ind] (fu) at (1,2) {};
\node[ind] (fv2) at (-1,2) {};
\node () at (-1,-0.5) {$g_{f,1}$};
\node () at (1,-0.5) {$g_{f,2}$};
\node () at (1,2.5) {$f_u$};
\node () at (-1,2.5) {$f_{v_2}$};

\draw (gf) -- (fu);
\draw (gf) -- (fv2);
\draw (gf2) -- (fu);
\draw (gf2) -- (fv2);

\node[clique] (gh) at (3,0) {};
\node[clique] (gh2) at (5,0) {};
\node[ind] (hu) at (5,2) {};
\node[ind] (hv3) at (3,2) {};
\node () at (3,-0.5) {$g_{h,1}$};
\node () at (5,-0.5) {$g_{h,2}$};
\node () at (5,2.5) {$h_u$};
\node () at (3,2.5) {$h_{v_3}$};

\draw (gh) -- (hu);
\draw (gh) -- (hv3);
\draw (gh2) -- (hu);
\draw (gh2) -- (hv3);

\node[clique] (ge) at (-5,0) {};
\node[clique] (ge2) at (-3,0) {};
\node[ind] (eu) at (-3,2) {};
\node[ind] (ev1) at (-5,2) {};
\node () at (-5,-0.5) {$g_{e,1}$};
\node () at (-3,-0.5) {$g_{e,2}$};
\node () at (-3,2.5) {$e_u$};
\node () at (-5,2.5) {$e_{v_1}$};

\draw (ge) -- (eu);
\draw (ge) -- (ev1);
\draw (ge2) -- (eu);
\draw (ge2) -- (ev1);

\draw[conflict] (ev1) -- (gf);
\draw[conflict] (ev1) -- (gh);

\draw[conflict] (fv2) -- (ge);
\draw[conflict] (fv2) -- (gh2);

\draw[conflict] (hv3) -- (ge2);
\draw[conflict] (hv3) -- (gf2);

\end{tikzpicture}
\end{tabular}

\end{center}
\caption{Construction when $u$ is an AND vertex (top) or an OR vertex (bottom). In both cases $v_1$ is a COPY vertex. 
The part of the construction corresponding to $\ell$ is not drawn: $\ell_{v_4}$ would be a common neighbor of $g_{e,1}, g_{e,2}$ and $e_u$ would be a common neighbor of $\ell_{e,1},\ell_{e,2}$. Edges connecting selector vertices 
to their corresponding gates are drawn thinner for readability. On the right, black (gate) vertices are connected in a clique.}
\label{fig:ANDOR}
\end{figure}

We are now ready to show that if we only consider ``consistent'' configurations
in $G_b$, then the new instance simulates the original NCL reconfiguration
problem.

\begin{lemma}\label{lem:step1}

There is a valid reconfiguration of the NCL instance given by $G,D,D'$ if and
only if there exists a valid reconfiguration under the TS rule  from $S$ to $T$
in $G_b$ such that no independent set of the reconfiguration sequence contains
both $e_u,e_v$ for any $e=(u,v)\in E$.

\end{lemma}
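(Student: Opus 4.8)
The plan is to prove the biconditional by exhibiting explicit correspondences in both directions between valid NCL orientation sequences and ``consistent'' TS reconfiguration sequences (those never placing tokens on both $e_u,e_v$). First I would pin down the bijection between objects: a consistent configuration of $G_b$ that contains a token on exactly one of $e_u,e_v$ for every edge $e$ corresponds to an orientation of $G$ (token on $e_u$ $\Leftrightarrow$ $e$ oriented towards $u$), and $S,T$ are by construction the configurations corresponding to $D,D'$. The key structural claim to isolate is: \emph{a configuration of $G_b$ with exactly one token per selector pair is a valid independent set if and only if the corresponding orientation satisfies the NCL in-degree constraint at every vertex.} This is a finite local check: for a COPY vertex $u$ with blue edges $e=(u,v_1),f=(u,v_2)$, the only NCL-invalid orientation is the one sending both edges away from $u$, which by Step~6 places tokens on $e_{v_1}$ (adjacent to $g_{f,1},g_{f,2}$) and on $f_{v_2}$ (adjacent to $g_{e,1},g_{e,2}$)---but in a configuration with no gate tokens these cause no conflict, so one must be careful: the conflict is actually triggered during a move, not in a static configuration. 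Hence the right statement is that consistency plus validity of all the intermediate \emph{static} configurations is automatic, and the real content is in analyzing \emph{moves}.

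So the core of the proof is to show that, assuming all intermediate configurations are consistent, each move of the TS sequence either does nothing (to the induced orientation) or reorients exactly one edge, and conversely each NCL edge-reorientation can be simulated by a short block of TS moves. For the forward direction ($\Rightarrow$): given an NCL sequence $D_0,\dots,D_t$, I simulate a single reorientation of an edge $e=(u,v)$ from $u$ to $v$ by sliding the token from $e_u$ to a gate vertex of $e$ and then to $e_v$. I must verify this intermediate step is legal, i.e. that the chosen gate vertex is free and non-adjacent to every currently occupied selector. Using the neighborhood descriptions given just before the lemma, the gate vertices of $e$ are adjacent only to $e_u,e_v$ themselves and to the selector vertices $\{g\text{'s other-endpoint neighbors}\}$ which, by the AND/OR/COPY gadget design, are occupied precisely when reorienting $e$ would violate the in-degree constraint at $u$ or $v$ --- and these are exactly the cases forbidden in a valid NCL move. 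For blue edges incident on an OR vertex there are two gate vertices $g_{e,1},g_{e,2}$ with slightly different neighborhoods, so I pick whichever one is unblocked; the OR gadget (Step~5) is arranged so that if the move is NCL-valid at least one of the two is free. This case analysis, split into the three vertex types and checking both endpoints of $e$, is the main bookkeeping obstacle but is entirely routine given the explicit neighborhoods.

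For the backward direction ($\Leftarrow$): given a consistent TS sequence from $S$ to $T$, I read off an NCL sequence by recording the induced orientation at every step where the configuration has exactly one token per selector pair and no token on any gate vertex (a ``clean'' configuration; $S$ and $T$ are clean). Between two consecutive clean configurations, the sequence of moves must be: slide a token off some selector $e_u$ onto a gate $g$ (a gate of $e$, since gates of $e$ are the only gate-neighbors of $e_u$ that can be legally occupied --- here I need consistency to rule out the token returning to a \emph{different} selector pair, because the gate $g$ of $e$ is by construction non-adjacent to every selector other than $e_u,e_v$ plus the blocked ones, and landing on a blocked selector $e'_w$ would require $e'_w$ empty hence its pair partner occupied, contradicting... this needs a careful but short argument), then slide it back down; a ``clean-to-clean'' segment therefore moves one token from $e_u$ to $e_v$, i.e. reorients exactly one edge. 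Validity of the resulting orientation at each clean step follows because the gadget conflict edges make the intermediate gate-occupied configuration infeasible unless the in-degree constraint is met. The one genuine subtlety I expect to wrestle with is ruling out ``long excursions'' --- a token leaving $e_u$, wandering through several gate vertices (which form a clique, so at most one can be occupied at a time, which already helps a lot), and returning to the wrong place --- but the clique structure of the gates plus consistency forces such an excursion to be equivalent to a sequence of single-edge reorientations, so nothing is lost. I would state this as a short sub-claim and dispatch it using the observation that at most one gate token exists at any time together with the neighborhood descriptions.
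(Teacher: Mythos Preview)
Your proposal is correct and follows essentially the same approach as the paper: the forward direction via case analysis on the type of the endpoints of the reoriented edge (checking a suitable gate of $e$ is free), and the backward direction via the observation that consistency (each $f\neq e$ retains exactly one occupied selector, which dominates every gate of $f$) forces the token leaving $e_u$ to land on a gate of $e$ and then on $e_v$. The paper dispatches your ``long excursion'' worry simply by passing to a \emph{shortest} TS sequence: once the token sits on a gate of $e$, no other token can move (the clique is blocked and selectors are independent), sliding to any non-$e$ selector neighbor would violate consistency, and returning to $e_u$ would repeat a configuration, so the next move must be to $e_v$.
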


\begin{proof}

Since $G_b$ is a split graph, any independent set contains at most one vertex
from the clique made up of the gate vertices. We will call an independent set
that contains no gate vertices a ``main'' configuration. Furthermore, for main
configurations that also obey the restrictions of the lemma (i.e. do not
contain both $e_u,e_v$ for any $e\in E$), we observe that there is a natural
one-to-one correspondence with the set of orientations of $G$: an edge
$e=(u,v)$ is oriented towards $u$ if and only if $e_u$ is in the independent
set. (We implicitly use the fact that the number of tokens is $|E|$, therefore
for each pair $e_u,e_v$ exactly one vertex has a token in such a main
configuration).

Suppose now that we have two consecutive valid orientations $D_i, D_{i+1}$ in
the reconfiguration sequence of $G$ such that $D_i, D_{i+1}$ differ only on the
edge $e=(u,v)$, which $D_i$ orients towards $u$. We want to show that the sets
$I_i, I_{i+1}$ obtained using the correspondence above from $D_i, D_{i+1}$ can
be obtained from each other with a pair of sliding token moves. Indeed, the
sets $I_i, I_{i+1}$ are identical except that $\{e_u\}= I_i\setminus I_{i+1}$ and
$\{e_v\} = I_{i+1}\setminus I_i$. We would like to slide the token from $e_u$ to
$e_v$ using a gate vertex adjacent to both vertices. 

First, assume that $e\in R$, so there exists a single gate vertex $g_e$.
Furthermore, $u,v$ are both AND vertices. Since both $D_i, D_{i+1}$ are valid
configurations, in both configurations the blue edges incident on $u,v$ are
oriented towards these two vertices. As a result $g_e$ has no neighbor in
$I_i$. 

Second, suppose $e=(u,v)\in B$ and one of $u,v$ is a COPY vertex. If $e$ is
incident on an AND vertex, because both $D_i, D_{i+1}$ are valid and agree on
all edges except $e$ we have that both red edges incident on the AND vertex are
oriented towards it in both configurations. Similarly, the second blue edge
incident on the COPY endpoint of $e$ is oriented towards it in both
configurations. We therefore observe that neither $g_{e,1}$, nor $g_{e,2}$ has
a neighbor in $I_i$ except $e_u$, so we can safely slide $e_u\to g_{e,1}\to
e_v$.

Similarly, for the last case, suppose that $e=(u,v)\in B$ and one of the
endpoints of $e$ is an OR vertex, while the other is a COPY vertex. Again,
because $D_i, D_{i+1}$ are both valid and only disagree on $e$, at least one of
the blue edges incident on the OR vertex (other than $e$) is oriented towards
it in both configurations. As before, the second blue edge incident on the COPY
vertex is oriented towards it in both configurations.  Therefore, one of
$g_{e,1}, g_{e,2}$ has no neighbor in $I_i$ except $e_u$, so we can safely
slide the token from $e_u$ to $e_v$ with two moves.

To complete the proof, we need to show that if we have a valid token sliding
reconfiguration sequence, this gives a valid reorientation sequence for $G$.
The main observation now is that in a shortest token sliding solution that
obeys the properties of the lemma, a token that slides out of $e_u$ must
necessarily in the next move slide into $e_v$, where $e=(u,v)\in E$. To see
this, observe that because of the requirement that the set does not contain
both selector vertices of any edge, the tokens found on other selector vertices
dominate all gate vertices except those corresponding to $e$. Since we can
neither repeat configurations, nor add a second token to the clique made up of
gate vertices, the next move must slide the token to the other selector vertex. 

To see that the orientation sequence obtained through the natural translation of main
configurations is valid, consider two consecutive main configurations
$I_i,I_{i+1}$ in the token sliding solution, such that the corresponding
orientations are $D_i, D_{i+1}$, and $D_i$ is valid. We will show that
$D_{i+1}$ is also valid.  Suppose that $D_{i+1}$ differs from $D_i$ in the edge
$e=(u,v)$ which is oriented towards $u$ in $D_i$ (it is not hard to see that
$D_i, D_{i+1}$ cannot differ in more than one edge).  Thus, $I_i$ is
transformable in two moves to $I_{i+1}$ by sliding $e_u$ to a gate
corresponding to $e$ and then to $e_v$.  If $e$ is a red edge, this means that
in $D_i$ both blue edges incident on $u,v$ are directed towards $u,v$, so the
reorientation is valid. If $e$ is blue, we first assume that $u$ is a COPY
vertex.  Since a gate corresponding to $u$ is free, the other blue edge
incident on $u$ is oriented towards $u$ in $D_i$ and we have a valid move.
Finally, if $e$ is blue and $u$ is an OR vertex, we conclude that, since at
least one gate from $g_{e,1}, g_{e,2}$ is available in $I_i$, at least one of
the two other blue edges incident on $u$ is directed towards $u$ in $D_i$ and
we have a valid move.  \end{proof}

\subsubsection*{Second Step: Enforcing Consistency}

We will now construct a graph $G_f$ that will function in a way similar to the
graph we have already constructed but in a way that enforces consistency. Let
$G_b=(V_b,E_b)$ be the graph constructed in the first step of our reduction, and
let $E_g\subseteq E_b$ be the set of gate edges, that is, the set of edges that
connect the selector vertices for an edge $e$ to the corresponding gate(s).

Let $m:= |E|$ and $C := m+4$. We first take $C$ disjoint copies of $G_b=(V_b,E_b)$ and
for a vertex $v\in V_b$ we will use the notation $v^i$, where $1\le i\le C$ to
denote the vertex corresponding to $v$ in the $i$-th copy. Then, for every edge
$(u,v)\in E_b\setminus E_g$ (every non-gate edge) and for all
$i,j\in\{1,\ldots,C\}$ with $i \ne j$,  we add the edge $(u^i,v^j)$. This completes the
construction of $G_f$ and it is not hard to see that the graph is split, as the
$C$ copies of the clique of $G_b$ form a larger clique.  To complete our
instance let us explain how to translate an independent set of $G_b$ that
contains no vertices of the clique to an independent set of $G_f$: we do this
in the natural way by including in the new independent set all $C$ copies of
vertices of the original independent set.  Since both the initial and final
independent sets in our first construction use no vertices in the clique, we
have in this way two independent sets of size $mC$ in the new graph, and thus a
valid Token Sliding instance.  Let $S,T$ be the two independent sets of $G_b$
we are asked to transform and $S_f,T_f$ the corresponding independent sets of
$G_f$.

We first show that if we have a solution for reconfiguration in $G_b$ then we
have a solution for reconfiguring the sets in the new graph.

\begin{lemma}\label{lem:step2a}

Let $I_1,I_2$ be two independent sets of $G_b$ of size $m$ that use no vertices
of the clique, respect the conditions of Lemma \ref{lem:step1}, and can be
transformed to one another by two sliding moves. Then the independent sets
$I_1',I_2'$ which are obtained in $G_f$ by including all copies of vertices of
$I_1,I_2$ respectively can be transformed into one another by a sequence of
$2C$ TS moves.

\end{lemma}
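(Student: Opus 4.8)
The plan is to exhibit an explicit sequence of $2C$ TS moves in $G_f$ that mirrors, copy by copy, the two-move transformation in $G_b$. Recall that by hypothesis $I_1$ and $I_2$ differ only in that a token slides from some selector vertex $e_u$ to the selector vertex $e_v$, via a gate vertex $g$ adjacent to both (where $g$ is $g_e$ if $e\in R$, or one of $g_{e,1},g_{e,2}$ if $e\in B$). In $G_b$ this is the two-move sequence $e_u\to g\to e_v$, and the fact that it is legal tells us that in $I_1$ the vertex $g$ has no neighbor in $I_1$ other than $e_u$. The corresponding sets $I_1',I_2'$ in $G_f$ differ only on the $C$ copies: $I_1'$ contains $e_u^1,\dots,e_u^C$ where $I_2'$ contains $e_v^1,\dots,e_v^C$, and all other tokens coincide.

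The sequence I would use processes the copies one at a time. For $i=1,2,\dots,C$ in turn, perform the two moves $e_u^i\to g^i\to e_v^i$, where $g^i$ is the copy of $g$ in the $i$-th copy of $G_b$. This is $2C$ moves total, ending exactly at $I_1'$... no, ending at the set where every $e_u^i$ has been replaced by $e_v^i$, which is precisely $I_2'$. It remains to check that every intermediate set is independent (since we are in the $c=1$ setting, ``$c$-colorable'' just means ``independent''). The first move of step $i$ places a token on $g^i$; I must check $g^i$ has no neighbor among the tokens currently present. Its neighbors inside the $i$-th copy are $e_u^i,e_v^i$ together with some selector vertices $f_x^i$ of \emph{other} edges, plus (since $g$ is a gate vertex) all other gate-copies $g'^j$ — but no gate vertex currently carries a token, because at each moment at most one of $g^1,\dots,g^C$ is occupied and we only ever occupy $g^i$. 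The selector neighbors: inside copy $i$ these are exactly the neighbors of $g$ in $G_b$ other than $e_u,e_v$, none of which hold a token in $I_1$ and hence (since the already-processed copies $<i$ now hold $e_v$-type tokens and the unprocessed copies $\ge i$ still match $I_1$) none hold a token now. Across copies, the edge $(g,f_x)\in E_b$ is a non-gate edge (gate edges only join a selector to its own gate), so $G_f$ also has the cross edges $(g^i,f_x^j)$ for $j\ne i$; but $f_x$ holds no token in $I_1$, so $f_x^j$ holds no token in any copy at any point of the process, so these cross neighbors are unoccupied as well. Hence the set with $g^i$ added is independent. The second move of step $i$ removes $g^i$ and places a token on $e_v^i$; independence here reduces to: $e_v^i$ has no neighbor currently occupied. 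Its neighbors are its own gates (copies of $g_e$ resp. $g_{e,1},g_{e,2}$, all unoccupied) and the selector vertices adjacent to $e_v$ in $G_b$ via the conflict edges added in steps 4--6 — but those are non-gate edges, so the same token is present (or absent) as in $I_2$; since $I_2$ is independent and contains $e_v$, none of those neighbors hold a token in $I_2$, and as above none hold a token at this intermediate stage either. So every one of the $2C$ configurations is independent, proving the claim.

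The only subtle point — and the one I would flag as the main obstacle — is the bookkeeping that at no intermediate step do we simultaneously have two tokens on adjacent vertices coming from \emph{different} copies; this is exactly why one must verify that the relevant edges $(g,f_x)$ and the conflict edges are non-gate edges, so that the cross-copy edges of $G_f$ are present, and then argue that the endpoints on the ``other'' side are unoccupied throughout. Once one observes that the selector vertex $f_x$ in question is unoccupied in \emph{both} $I_1$ and $I_2$ (being a conflicting selector, not the one being moved), the entire process stays token-free on all those vertices in all copies simultaneously, and the verification goes through uniformly. I would present the argument by fixing the generic step $i$, listing the two sets obtained, and checking independence of each against the three categories of potential neighbors (own-copy selectors, gate vertices, cross-copy selectors) as above.
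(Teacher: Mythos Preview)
Your proof is correct and follows essentially the same approach as the paper: process the copies one by one via $e_u^i\to g^i\to e_v^i$, using that the non--gate-edge neighbors of $g$ are absent from $I_1$ (and hence from every copy throughout), while the gate-edge neighbors $e_u,e_v$ are not replicated across copies. One small slip: in your check for the second move you describe the conflict neighbors of $e_v$ as ``selector vertices'', but selector vertices form an independent set in $G_b$, so all neighbors of $e_v$ are in fact gate vertices---this only makes your argument easier (no gate vertex other than $g^i$ ever carries a token), and your independence-of-$I_2$ reasoning goes through regardless.
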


\begin{proof}

Each of $I_1,I_2$ uses exactly one of the vertices $e_u,e_v$, for each edge
$e=(u,v)\in E$, because of their size, the fact that they contain no vertex of
the clique, and the fact that neither contains both $e_u,e_v$ for any edge
$e=(u,v)\in E$ (this is the condition of Lemma \ref{lem:step1}). If $I_1$ can
be transformed into $I_2$ with two sliding moves, the first move takes a token
from an independent set vertex, say $e_u$ and moves it to the clique and the
second moves the same token to $e_v$.  Since $I_1$ contains a token on each
pair of selector vertices, the only vertex of the clique on which the token can
be moved is a gate vertex corresponding to $e$, say $g_e$ (if $e$ is red) or
$g_{e,1}$ (if $e$ is blue).  We now observe that if $g_e$ (or similarly
$g_{e,1}$) is available in $I_1$ (that is, it has no neighbors in $I_1$ besides
$e_u$), then the same is true for $g_e^i$ for all $i\in\{1,\ldots,C\}$ in
$I_1'$. To see this, note that the neighbors of $g_e^i$ are, $e_u^i, e_v^i$,
and, for each $v\in N(g_e)$ all the vertices $v^j$ for $j\in\{1,\ldots,C\}$.
Since none of the neighbors of $g_e$ is in $I_1$, $g_e^i$ is available. We
therefore slide, one by one, a token from $e_u^i$ to $g_e^i$ and then to
$e_v^i$, for all $i\in\{1,\ldots,C\}$.  \end{proof}

Now, for the more involved direction of the reduction we first observe that it
is impossible for a reconfiguration to arrive at a situation where the solution
is highly irregular, in the sense that, for an edge $e=(u,v)$ we have multiple
tokens on copies of both $e_u$ and $e_v$. 

\begin{lemma}\label{lem:step2b}

Let $S_f$ be the initial independent set constructed in our instance and $S'$
be an independent set which for some $e=(u,v)\in E$ and for some $i,j\in
\{1,\ldots,C\}$ with $i\neq j$ has $e_u^i,e_v^i, e_u^j, e_v^j\in S'$. Then $S'$
is not reachable with TS moves from $S_f$.

\end{lemma}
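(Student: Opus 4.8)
The plan is to argue by a potential/invariant argument: I will exhibit a quantity that is invariant (or monotone) under TS moves starting from $S_f$ and that already distinguishes $S_f$ from any $S'$ of the stated form. The natural candidate is the total number of tokens sitting on selector vertices of the form $e_u^i$ summed over all copies $i$, for a fixed edge $e=(u,v)$; call this $n_u(S')$, and similarly $n_v(S')$. In the initial set $S_f$ we have, for every edge $e$, that exactly one of $n_u(S_f)=C,\ n_v(S_f)=0$ or $n_u(S_f)=0,\ n_v(S_f)=C$ holds, so in particular $n_u+n_v=C$ and one of the two is $0$. The target configuration of the lemma has $n_u(S')\ge 2$ and $n_v(S')\ge 2$. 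So it suffices to show that no reachable set can have both $n_u\ge 1$ and $n_v\ge 1$ simultaneously; equivalently, that the invariant ``for every edge $e$, all tokens on selectors of $e$ are on copies of the same endpoint'' is preserved by every TS move available from $S_f$.

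First I would set up the structure of the clique of $G_f$: it is the union of the $C$ copies of the gate-clique of $G_b$, and since $G_f$ is split, any independent set contains at most one clique vertex, i.e.\ at most one gate vertex $g^i$ (of any copy $i$). Hence at most one token is ``in transit'' at any time, and a single TS move either slides a token between two selector vertices joined by a cross-copy non-gate edge, or slides a token from a selector to a gate vertex or back. The key local claim is: if the current set $S''$ reachable from $S_f$ still satisfies the invariant and in addition has $n_u(S'')+n_v(S'')=C$ for the relevant edge $e$ (i.e.\ it is ``saturated'' on $e$), then the only gate vertex adjacent to a token-bearing selector of $e$ whose neighborhood is otherwise empty is a copy of the gate of $e$ itself; and sliding such a token onto $g_e^i$ and then off again can only land it back on a selector of $e$ (a copy of $e_u$ or $e_v$), never on a selector of a different edge, because the other selectors of $e$ are occupied in all other copies and dominate every gate of every other edge (this is exactly the domination phenomenon exploited in the proof of Lemma~\ref{lem:step1}, now amplified by having $C$ copies). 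Therefore the invariant, together with saturation on every edge, is preserved. Since $S_f$ is saturated on every edge and satisfies the invariant, so does every set reachable from it; but $S'$ violates the invariant on $e$, contradiction.

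The main obstacle will be the bookkeeping for the ``in transit'' token: when one token sits on a gate vertex $g^i$, the set is momentarily \emph{not} saturated (some edge now has only $C-1$ selector tokens), so the domination argument must be made carefully for that transient state — I need to check that even then the freed-up selector slot of copy $i$ for edge $e$ cannot be reached from a token coming off an unrelated gate, which again follows because selectors of $e$ in the other $C-1$ copies are still present and dominate all foreign gates, and $C=m+4$ is large enough that the counting never fails (a token off a foreign gate would have to move to a selector of that foreign edge, keeping that edge's invariant and saturation). A secondary subtlety is handling cross-copy non-gate edges $(x^i,y^j)$ with $i\ne j$: I must verify that sliding a token along such an edge cannot move a token between the $e_u$-side and the $e_v$-side of a selector pair, which holds because the only non-gate edges incident to a selector $e_u$ inside $G_b$ go to selectors of \emph{other} edges (the conflict edges of steps~4--6) or to nothing relevant, never to $e_v$; so such a slide changes which edge a token is associated with only by moving it wholesale to a foreign selector, and one then tracks the invariant for that foreign edge. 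Formalizing ``the invariant is maintained for all edges simultaneously'' as a single inductive statement over the reconfiguration sequence is the crux; once stated correctly, each individual move is a short case check mirroring the three bulleted neighborhood descriptions of gate vertices given before Lemma~\ref{lem:step1}.
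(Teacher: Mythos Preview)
The approach has a genuine gap: the invariant you propose --- ``for every edge $e$, all tokens on selectors of $e$ lie on copies of the same endpoint'' --- is too strong and is \emph{false} for sets reachable from $S_f$. Indeed, the forward direction of the reduction (Lemma~\ref{lem:step2a}) explicitly constructs such sets: to simulate one re-orientation of $e=(u,v)$ one slides, for $i=1,\ldots,C$ in turn, the token $e_u^i\to g_e^i\to e_v^i$; after the first pair of moves the current set contains $e_v^1$ together with $e_u^2,\ldots,e_u^C$, so $n_u\ge 1$ and $n_v\ge 1$ simultaneously. Your inductive hypothesis therefore fails on the very sequences the construction is designed to allow, and your later relaxation (``a token leaving $g_e^i$ lands on a copy of $e_u$ \emph{or} $e_v$'') already concedes that the invariant is not preserved, so it no longer implies the conclusion of the lemma.

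There is also a structural misreading: the non-gate edges added in steps~4--6 do not join selectors to selectors; they join selectors to \emph{gate} vertices of other edges (the selector side is an independent set, so your analysis of cross-copy selector--selector slides is vacuous). The paper's proof sidesteps all of this with a short minimal-counterexample argument: take the first reachable $S'$ containing $e_u^i,e_v^i,e_u^j,e_v^j$, look at its predecessor $S''$ (which contains exactly three of them, say missing $e_v^j$), and observe that every neighbor of $e_v^j$ in $G_f$ is already adjacent to one of the three present tokens --- the same-copy gates of $e$ are adjacent to $e_u^j$, and every cross-copied non-gate neighbor of $e_v^j$ is also adjacent to $e_v^i$. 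Hence no token can slide into $e_v^j$, a contradiction. No global invariant is needed; only the local neighborhood of $e_v^j$ and the presence of those three specific tokens.
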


\begin{proof}

Let $S'$ be an independent set that satisfies the conditions of the lemma but
is reachable from $S_f$ with the minimum number of token sliding moves.
Consider a sequence that transforms $S_f$ to $S'$, and let $S''$ be the
independent set immediately before $S'$ in this sequence. $S''$ contains
exactly three of the vertices $e_u^i,e_v^i, e_u^j, e_v^j$. Without loss of
generality say $e_v^j\not\in S''$. Therefore, the move that transforms $S''$ to
$S'$ slides a token into $e_v^j$ from one of the neighbors of this vertex. We
now observe that $N(e_v^j)$ contains $C$ copies of each neighbor of $e_v$ in
$G_b$, plus the gate vertices corresponding to $e$ in the $j$-th copy of $G_b$.
However, the $C$ copies of the neighbors of $e_v$ are also neighbors of
$e_v^i$, hence a token cannot slide through these vertices. Furthermore, the
gate vertices of $e$ are also neighbors of $e_u^j$. We therefore have a
contradiction.  \end{proof}

We now use Lemma \ref{lem:step2b} to show that for each original edge, the graph
$G_f$ contains some non-trivial number of tokens on the selector vertices of
that edge.

\begin{lemma}\label{lem:step2c} 

Let $S_f$ be the initial independent set constructed in our instance and $S'$
be an independent set which for some $e=(u,v)\in E$ has $|S'\cap (\{e_u^i\ |\
1\le i \le C\} \cup \{e_v^i\ |\ 1\le i \le C\})| < 4  $.  Then $S'$ is
unreachable from $S_f$.

\end{lemma}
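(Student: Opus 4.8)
The plan is to argue by induction on the length of a reconfiguration sequence starting from $S_f$, strengthening the hypothesis so that it is actually maintainable. The key invariant I would aim for is: for every edge $e=(u,v)\in E$, every independent set $S'$ reachable from $S_f$ satisfies $|S'\cap (\{e_u^i\} \cup \{e_v^i\})| \ge 4$. In fact, since the total number of tokens is $mC = m(m+4)$ and each of the $m$ edges $e$ contributes at most $2C = 2(m+4)$ tokens among its selector copies (plus at most one token can sit on the gate clique overall), a simple counting argument will show that the number of tokens placed on selector copies is at least $mC-1$, and these are distributed among the $m$ edge-blocks. So it suffices to show that no single edge-block can drop below $4$ tokens; the counting then forces all the others to stay at their "full" level $2C$ minus a tiny slack, but I only need the lower bound of $4$ for each block.

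First I would set up the induction: $S_f$ itself has exactly $2C = 2(m+4) > 4$ tokens in each edge-block (one copy of $e_u$ or $e_v$ per copy $i$, namely $C$ of them, doubled is wrong — actually $C$ tokens, since for each copy exactly one of $e_u^i,e_v^i$ is in $S_f$; and $C = m+4 \ge 5 > 4$), so the base case holds comfortably. For the inductive step, suppose $S'$ is reachable and has a block with exactly $3$ tokens among $\{e_u^i\}\cup\{e_v^i\}$ for some edge $e=(u,v)$, and that $S'$ is reached by a shortest such sequence; let $S''$ be the set one move earlier. By minimality $S''$ has $\ge 4$ tokens in that block, so the move from $S''$ to $S'$ slides a token \emph{out} of some $e_u^i$ (or $e_v^i$) and into a neighbor that lies outside this edge-block. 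The neighbors of $e_u^i$ are: the gate vertices for $e$ in the $i$-th copy (which lie on the gate clique, hence at most one token total can be there, and moreover I will use Lemma~\ref{lem:step2b} to control this), and the $C$ copies $w^j$ of every $G_b$-neighbor $w$ of $e_u$ that represents a selector of a \emph{different} NCL edge $e'$. Sliding into $w^j$ puts a token into the block of $e'$.

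The heart of the argument is then a pigeonhole/parity bookkeeping across blocks. The idea I would push is the same global invariant used implicitly in Lemma~\ref{lem:step1} and Lemma~\ref{lem:step2a}: the only way a token legitimately leaves a selector block is via a gate vertex of \emph{that same} block (the "$e_u\to g_e\to e_v$" motif), because the tokens currently sitting on the other selectors of the \emph{same} block dominate every vertex $w^j$ that $e_u^i$ could otherwise slide to — here is where Lemma~\ref{lem:step2b} is essential: it guarantees that for every edge $e'$, the copies of $e'_{u'}$ and $e'_{v'}$ present in $S'$ do not come in the "mixed, doubled" pattern, so the relevant dominating selectors are genuinely present. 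Concretely, if a token tried to move from $e_u^i$ to $w^j$ where $w = e'_{x}$ is a selector of another edge $e'$, then the copy $(e'_x)^k$ being occupied for the appropriate $k$ would block it; the only occupancy pattern in the $e'$-block that would \emph{not} block it is one already excluded by Lemma~\ref{lem:step2b}. Combining this with the count "at most one token on the whole gate clique" forces: once a block has $< 2C$ tokens, all its missing tokens must sit on its own gate vertices, of which there are at most two, but at most one can be occupied at a time — so a block can be at most one token short of $2C$, i.e.\ never below $2C-1 = 2m+7 \ge 4$ (using $m\ge 1$).

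The main obstacle I expect is making the "dominating selectors of the same block" argument fully rigorous \emph{simultaneously} for all blocks along the sequence: one must be careful that when block $e$ loses a token, that token genuinely lands in some other block $e'$, and then track that $e'$ now has $2C+1$ tokens, which by the independence constraint plus the gate-clique bound is also impossible unless the extra token is immediately on a gate of $e'$ — and then the only next move is to push it onto $e'_v$, restoring the count. So the clean way is probably to prove a stronger statement by induction: \emph{every} reachable $S'$ has, for \emph{every} edge $e$, exactly $C$ tokens on $\{e_u^i\}\cup\{e_v^i\}$ except possibly one block which may have $C+1$ or $C-1$ (corresponding to a token "in transit" on a gate vertex), and conditions of Lemma~\ref{lem:step2b} hold. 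This stronger invariant is self-propagating under a single TS move, and it immediately yields the lemma since $C-1 = m+3 \ge 4$. I would phrase the final write-up around this strengthened invariant, citing Lemma~\ref{lem:step2b} at the point where one rules out a token escaping a block through a "foreign" selector vertex.
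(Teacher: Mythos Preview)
Your plan misses the short counting argument that the paper uses, and the long inductive route you propose rests on a structural misunderstanding of $G_f$.

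First, the actual proof is two lines of bookkeeping. Lemma~\ref{lem:step2b} does not just rule out a particular ``mixed, doubled'' pattern: by pigeonhole it gives an \emph{upper bound} of $C+1$ on the number of tokens any single edge-block can hold (if a block had $C+2$ tokens among $2C$ selector copies, two distinct indices $i\neq j$ would each carry both $e_u,e_v$). Since the total number of tokens is $mC$ and at most one lies in the gate clique, summing the upper bound over the $m-1$ blocks other than the block of a fixed edge $f$ leaves at least $mC-1-(m-1)(C+1)=C-m=4$ tokens for the $f$-block. That is the whole proof; no induction is needed.

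Second, your inductive argument is built on a wrong picture of adjacencies. You write that the non-gate neighbors of $e_u^i$ are ``$C$ copies $w^j$ of every $G_b$-neighbor $w$ of $e_u$ that represents a selector of a different NCL edge $e'$''. But selector vertices form an independent set in $G_b$; every non-gate neighbor $w$ of $e_u$ is a \emph{gate} vertex of some other edge, not a selector. Hence a token leaving $e_u^i$ always lands in the gate clique, never directly in another selector block, and the ``domination by selectors of the same block'' mechanism you invoke does not exist as described. Relatedly, your strengthened invariant (``all blocks have exactly $C$ tokens except possibly one, off by one'') is not obviously self-propagating: a token can slide from a gate of $e$ to a gate of $e'$ (all gates are pairwise adjacent) and then into a selector of $e'$, producing one block at $C-1$ and another at $C+1$ simultaneously. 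You could patch the invariant, but there is no need once you use Lemma~\ref{lem:step2b} as an upper bound and count.
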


\begin{proof}

Suppose $S'$ is reachable. Then by Lemma \ref{lem:step2b}, for each edge
$e=(u,v)\in E$ we have $|S'\cap (\{e_u^i\ |\ 1\le i \le C\} \cup \{e_v^i\ |\
1\le i \le C\})| \le C+1  $, because otherwise there would exist (by pigeonhole
principle) $e_u^i,e_v^i,e_u^j,e_v^j\in S'$. We now use a simple counting
argument. The total number of tokens is $mC$, while for any edge $f\in E$ we
have $\sum_{e\in E\setminus \{f\}} |S'\cap (\{e_u^i\ |\ 1\le i\le C\} \cup
\{e_v^i\ |\ 1\le i\le C\})| \le (m-1)(C+1)$. However, $(m-1)(C+1) = mC + m - C
- 1 = mC - 5$, where we use the fact that $C=m+4$.  As a result $|S'\cap
(\{e_u^i\ |\ 1\le i \le C\} \cup \{e_v^i\ |\ 1\le i \le C\})| \ge 4$ for any
edge $e\in E$, as the independent set $S'$ uses at most one vertex from the
clique.  \end{proof}

We are now ready to establish the final lemma that gives a mapping from a
sliding token reconfiguration in $G_f$ to one in $G_b$.

\begin{lemma}\label{lem:step2d} If there exists a reconfiguration from $S_f$ to
$T_f$ in $G_f$ under the TS rule then there exists a reconfiguration from $S$
to $T$ in $G_b$ under the TS rule which for each edge $e=(u,v)\in E$ contains
at most one of the vertices $e_u,e_v$ in every independent set in the sequence.
\end{lemma}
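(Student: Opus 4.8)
The idea is to take a TS reconfiguration sequence $S_f = J_0, J_1, \ldots, J_L = T_f$ in $G_f$ and project it down to a sequence of independent sets in $G_b$, then argue that (after cleaning up) this projection is a valid TS reconfiguration from $S$ to $T$ that never uses both selector vertices of any edge. The natural projection map sends an independent set $J$ of $G_f$ to the set $\pi(J) \subseteq V_b$ obtained by including $v \in V_b$ iff \emph{some} copy $v^i$ lies in $J$; equivalently one may think of $\pi(J)$ as recording, for each vertex of $G_b$, whether $J$ places at least one token on its copies. Note that $\pi(S_f) = S$ and $\pi(T_f) = T$ by construction. The first thing to check is that $\pi(J)$ is indeed an independent set of $G_b$: if $u, v \in \pi(J)$ with $(u,v) \in E_b$, then $J$ contains copies $u^i, v^j$; if $i = j$ these are adjacent in $G_f$, and if $i \neq j$ then $(u,v)$ must be a gate edge (since all non-gate edges are ``cloned'' across copies), so this case needs separate handling --- but a token on a gate vertex $g_e^i$ together with a token on a selector copy $e_u^j$ for $j\neq i$ is fine in $G_f$ yet would violate independence in $G_b$. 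So the projection as stated does not literally land in the independent sets of $G_b$, and I would instead define $\pi(J)$ to discard gate-vertex information appropriately, or restrict attention to the ``main'' configurations $J_k$ (those using no clique vertex) and interpolate.

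Concretely, I would proceed as follows. By Lemma~\ref{lem:step2c}, every $J_k$ in the sequence has at least $4$ tokens on $\{e_u^i : i\} \cup \{e_v^i : i\}$ for each edge $e=(u,v)$, and by Lemma~\ref{lem:step2b} it never has $e_u^i, e_v^i, e_u^j, e_v^j$ simultaneously, so for each edge $e$ the tokens on its selector copies are ``almost all on one side'': there is at most one copy-index $i$ with both $e_u^i$ and $e_v^i$ occupied, and the rest lie consistently on $e_u$-copies or consistently on $e_v$-copies. This lets me define, for each main configuration $J_k$, an orientation-type independent set $\hat J_k$ of $G_b$: for edge $e=(u,v)$, put $e_u$ into $\hat J_k$ if a strict majority (at least $3$ of the $\ge 4$) of occupied selector copies are $e_u$-copies, else put $e_v$. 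This is well-defined by the counting above, gives an independent set of $G_b$ (it picks exactly one selector per edge and no gate vertices), and it respects the ``at most one of $e_u, e_v$'' condition by construction. I then need: (a) $\hat J_0 = S$ and $\hat J_L = T$ (immediate, since $S_f, T_f$ have all copies occupied on one side); and (b) consecutive main configurations $J_k, J_{k'}$ in the sequence (consecutive meaning all intervening configurations use a clique vertex) map to $\hat J_k, \hat J_{k'}$ that are either equal or differ by exactly one selector-swap $e_u \leftrightarrow e_v$, which in $G_b$ is realizable by two TS moves through a gate vertex of $e$.

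**The main obstacle.** The hard part is step (b): controlling what can happen \emph{between} two consecutive main configurations of $G_f$, i.e.\ while exactly one token sits on a clique (gate) vertex $g^i$. During such an excursion, which selector copies can change? A token on $g_e^i$ has as neighbors the $C$ copies of each vertex in $N_{G_b}(g_e)$ plus $e_u^i, e_v^i$; so the only selector moves available while this clique token is present are slides into or out of $e_u^i$ or $e_v^i$ (every other selector copy adjacent to something currently occupied, as in the domination argument of Lemma~\ref{lem:step1}). I would argue, exactly mirroring the shortest-solution argument in Lemma~\ref{lem:step1}, that in a shortest $G_f$-solution a token that slides from $e_u^i$ to the clique must next slide back down to $e_v^i$ (or return to $e_u^i$, which we can delete from a shortest solution), so a single clique-excursion changes the occupancy of selector copies of exactly one edge $e$, shifting one token from an $e_u$-copy to an $e_v$-copy or vice versa. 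A single such shift changes the majority count for $e$ by at most one in each direction, so $\hat J$ either stays the same or flips on $e$ alone --- and when it flips, the fact that a gate copy $g_e^i$ was actually \emph{free} (had no neighbor in $J_k$ besides $e_u^i$) forces, via the neighborhoods computed after the basic construction, the corresponding in-degree/validity condition, so the two-move realization in $G_b$ is legitimate. Assembling these per-excursion steps, inserting a trivial no-op when $\hat J$ does not change, yields the desired reconfiguration from $S$ to $T$ in $G_b$ obeying the Lemma~\ref{lem:step1} condition. The one subtlety to be careful about is the ``boundary'' copy-index where $e_u^i$ and $e_v^i$ may temporarily coexist: I must check it never causes the majority to be ill-defined (guaranteed since $\ge 4$ copies and $\le 1$ bad index give a clear $\ge 3$ vs.\ $\le 1$ split) and never lets a token leak from an $e$-pair to an $f$-pair undetected (guaranteed by Lemma~\ref{lem:step2b} applied along the way, since such a leak would have to pass through a configuration with four tokens on some pair).
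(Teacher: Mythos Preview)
Your overall strategy---project configurations of $G_f$ to selector-only independent sets of $G_b$ by majority vote, argue that consecutive projections differ on at most one edge, and realize each swap through a free gate---is exactly the paper's. The execution differs, and yours has a gap.

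You restrict the projection to main configurations and claim that while a clique token sits on some $g_e^i$, ``the only selector moves available are slides into or out of $e_u^i$ or $e_v^i$.'' This is false. The token on $g_e^i$ may slide to \emph{any} unoccupied selector copy adjacent to it, and that includes every copy $f_w^j$ of every conflict neighbor $f_w\in N_{G_b}(g_e)\setminus\{e_u,e_v\}$: for $g_e^i$ to have been free in the first place, all such $f_w^j$ had to be unoccupied, so they are legal targets now. Your appeal to the domination argument of Lemma~\ref{lem:step1} is misdirected---there, selector tokens dominate gate vertices, not the reverse, and selector vertices in a split graph are never blocked by other selector tokens. Hence an excursion can genuinely carry a token from an $e$-selector copy to an $f$-selector copy with $f\neq e$; excursions can also have length greater than two via clique-to-clique slides, which your shortest-solution sketch does not address.

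The paper sidesteps all of this by applying the majority map to \emph{every} configuration in the $G_f$ sequence, not just main ones, and comparing configurations one TS move apart. A single move alters selector counts for at most one edge, so at most one majority flips. When the $e$-majority flips between $I_1$ and $I_2$, Lemma~\ref{lem:step2c} supplies persistent tokens $e_u^i,e_v^j\in I_1\cap I_2$; since conflict edges are replicated across all copy-pairs, these persistent tokens block every copy of every gate not associated with $e$ that is adjacent to $\{e_u,e_v\}$, forcing the clique vertex involved in the move to be a copy of a gate of $e$, whose freeness then transfers directly to $G_b$. Your excursion-based route is repairable (a token landing on a conflict-neighbor copy $f_w^j$ cannot flip the $f$-majority, as the $f_w$-side goes from $0$ to $1$ against at least four on the other side), but that observation is absent from the proposal, and the single-move analysis is both cleaner and what the paper actually does.
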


\begin{proof}

Take a configuration $I$ of $G_f$, that is an independent set in the supposed
sequence from $S_f$ to $T_f$. We map this independent set to an independent set
$I'$ of $G_b$ as follows: for each edge $e=(u,v)\in E$, we set $e_u\in I'$ if
and only if $|I\cap \{e_u^i\ |\ 1\le i\le C\}| \ge |I\cap \{e_v^i\ |\ 1\le i
\le C\}|$.  Informally, this means that we take the majority setting from
$G_f$.  We note that this always gives an independent set $I'$ that contains
exactly one vertex from $\{e_u, e_v\}$ for each $e=(u,v)\in E$.

Our main argument now is to show that if $I_1,I_2$ are two consecutive
independent sets of the solution for $G_f$, then the sets $I_1',I_2'$ which are
obtained in the way described above in $G_b$ are either identical or can be
obtained from one another with two sliding moves. If $I_1',I_2'$ are not
identical, they may differ in at most two vertices corresponding to an edge
$e=(u,v)\in E$, say $\{e_u\}= I_1'\setminus I_2'$ and $\{e_v\}= I_2'\setminus
I_1'$. This is not hard to see, since $I_2$ is obtained from $I_1$ with one
sliding move, and this move can only affect the majority opinion for at most
one edge.

Now we would like to argue that it is possible to slide $e_u$ to a gate vertex
associated to $e$ and then to $e_v$ in $G_b$. Consider the transition from
$I_1$ to $I_2$. This move either slides a token from some $e_u^i$ to the
clique, or slides a token from the clique to some $e_v^j$ (because the majority
opinion changed from $e_u$ to $e_v$). Because of Lemma \ref{lem:step2c}, both
$I_1$ and $I_2$ contain at least four vertices in some copies of $e_u,e_v$.
Hence, since at least half of these vertices are in copies of $e_u$ in $I_1$,
there exists some $e_u^i\in I_1\cap I_2$. Similarly, there exists some
$e_v^j\in I_1\cap I_2$. Consider now a gate vertex $g$ in the clique of $G_b$
such that $g$ is not associated with $e$. If $g$ has an edge to $\{e_u,e_v\}$
in $G_b$, then all copies of $g$ in $G_f$ have an edge to $I_1\cap I_2$,
therefore cannot belong in either set. As a result, the clique vertex that is
used in the transition from $I_1$ to $I_2$ is a copy of a gate vertex
associated with $e$ (either $g_e$, or one of $g_{e,1},g_{e,2}$, depending on
the color of $e$).  This gate vertex copy therefore has no neighbor in $I_1\cap
I_2$. From this we conclude that the same gate vertex in $G_b$ also has no
neighbor in $I_1'\cap I_2'$, as the majority opinion only changed for $e$. It
is therefore legal to slide from $e_u$ to this gate vertex and then to $e_v$.
\end{proof}

\begin{theorem}
\label{thm:split-TS}

Sliding Token Reconfiguration is PSPACE-complete for split graphs.

\end{theorem}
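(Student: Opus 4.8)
The plan is to combine the two steps of the reduction that have already been developed and verify that the resulting instance is a valid certificate of PSPACE-hardness, together with the (routine) membership in PSPACE.

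\medskip

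\noindent\textbf{Membership in PSPACE.} First I would note that Sliding Token Reconfiguration is in PSPACE on arbitrary graphs: the configuration graph (whose nodes are the independent sets of size $|S|$) can be explored nondeterministically, keeping only the current configuration and a step counter bounded by the (exponential) total number of configurations in memory; by Savitch's theorem $\mathsf{NPSPACE}=\mathsf{PSPACE}$. This gives the upper bound, so the work is entirely in the hardness direction.

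\medskip

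\noindent\textbf{Hardness.} The reduction is already assembled: starting from an NCL configuration-to-configuration instance $(G,D,D')$, we first apply Lemma~\ref{lem:mycsl} to bring $G$ into the form with AND, OR, and COPY vertices where every blue edge meets exactly one COPY vertex. We then build $G_b$ and the independent sets $S,T$ as in the first step, and finally build $G_f$ with $S_f,T_f$ as in the second step; all of this is clearly polynomial in $|V|+|E|$, and $G_f$ is split by construction. It remains to argue the correspondence
\[
(G,D,D') \text{ is a yes-instance} \iff S_f \text{ is reachable from } T_f \text{ in } G_f.
\]
For the forward direction, a reorientation sequence for $G$ yields, via Lemma~\ref{lem:step1}, a TS-reconfiguration from $S$ to $T$ in $G_b$ through ``main'' configurations that never place tokens on both $e_u,e_v$; decomposing this sequence into the individual two-move steps and applying Lemma~\ref{lem:step2a} to each such step lifts it to a TS-reconfiguration from $S_f$ to $T_f$ in $G_f$ (concatenating the blocks of $2C$ moves). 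For the backward direction, a TS-reconfiguration from $S_f$ to $T_f$ in $G_f$ is fed into Lemma~\ref{lem:step2d}, which produces a TS-reconfiguration from $S$ to $T$ in $G_b$ that never uses both $e_u,e_v$ simultaneously; Lemma~\ref{lem:step1} then turns this into a valid reorientation sequence for $G$. Combined with Theorem~\ref{thm:ncl} (PSPACE-hardness of NCL even in the restricted form, which is preserved by Lemma~\ref{lem:mycsl}), this establishes PSPACE-hardness on split graphs.

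\medskip

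\noindent\textbf{Main obstacle.} The genuinely delicate point — already absorbed into Lemmas~\ref{lem:step2b}, \ref{lem:step2c}, \ref{lem:step2d} — is the backward direction: an arbitrary TS-reconfiguration in $G_f$ need not respect the intended ``one token per selector pair'' discipline, so one must show that consistency is forced. The counting argument (with $C=m+4$) guarantees that for every edge at least four of the $2C$ selector copies carry tokens throughout, which in turn pins down that the only clique vertex a move can legally use is a gate copy of the edge whose majority opinion is changing; this is what makes the majority projection from $G_f$ to $G_b$ behave like a legitimate two-move TS step. So in writing up the theorem there is essentially nothing new to prove: the task is just to chain the established lemmas in the right order and to record the PSPACE upper bound.
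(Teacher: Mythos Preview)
Your proposal is correct and mirrors the paper's own proof essentially line for line: chain Lemma~\ref{lem:mycsl}, Lemma~\ref{lem:step1}, and repeated applications of Lemma~\ref{lem:step2a} for the forward direction, and Lemma~\ref{lem:step2d} followed by Lemma~\ref{lem:step1} for the backward direction. The only addition is your explicit PSPACE-membership argument via Savitch's theorem, which the paper leaves implicit but which is standard and correct.
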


\begin{proof}

We begin with an instance of the PSPACE-complete NCL reconfiguration problem,
as given in Lemma \ref{lem:mycsl}. We construct the instance $G_f, S_f, T_f$ of
Sliding Token Reconfiguration on split graphs as described (it's clear that
this can be done in polynomial time). If the NCL reconfiguration instance is a
YES instance, then by Lemma \ref{lem:step1} there exists a sliding token
reconfiguration of $G_b$, and by repeated applications of Lemma
\ref{lem:step2a} to independent sets that do not contain clique vertices in the
reconfiguration of $G_b$ there exists a sliding token reconfiguration of $G_f$.
If on the other hand there exists a sliding token reconfiguration on $G_f$,
then by Lemma \ref{lem:step2d} there exists a reconfiguration that satisfies
the condition of Lemma \ref{lem:step1} on $G_b$, hence the original NCL
instance is a YES instance.  \end{proof}

\section{PSPACE-completeness for Chordal Graphs for $c\ge2$}

In this section, we build upon the PSPACE-completeness result from
Section~\ref{sec:PSPACE-split} to show that $c$-\textsc{Colorable Set
Reconfiguration} is PSPACE-complete, for every $c\geq 2$, when the input graph
is restricted to be chordal. 

\begin{theorem}\label{thm:TS-split2}
For every $c\geq 2$, the $c$-\textsc{Colorable Set Reconfiguration} problem under the TS rule is PSPACE-complete,  even when the input graph is restricted to be chordal.
\end{theorem}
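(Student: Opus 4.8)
The plan is to reduce from \textsc{Independent Set Reconfiguration} under TS on split graphs, which is PSPACE-complete by Theorem~\ref{thm:split-TS}, and to ``pad'' the construction so that the effective color budget is forced down to $1$. Concretely, given a split graph $G_b$ (we may as well reuse $G_f$ from the previous section) with its clique $K$ and independent set $I$, together with independent sets $S,T$, I would build a chordal graph $G^{(c)}$ as follows. Take $G_f$ and add $c-1$ new vertices $z_1,\dots,z_{c-1}$, each of which is made adjacent to \emph{every} vertex of $G_f$ (and to each other), so that $\{z_1,\dots,z_{c-1}\}$ together with any single vertex of $G_f$ forms a clique; set $S^{(c)} = S_f \cup \{z_1,\dots,z_{c-1}\}$ and $T^{(c)} = T_f \cup \{z_1,\dots,z_{c-1}\}$. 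The graph remains chordal: the $z_i$'s are universal, and adding a universal vertex to a chordal graph preserves chordality (a universal vertex is simplicial into the whole graph, and inductively the rest is chordal). Note this $G^{(c)}$ is in general \emph{not} split once $c\ge 3$, which is exactly why we only claim chordality here, consistent with the dichotomy the paper is setting up.

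The key invariant is that in any reconfiguration sequence, the $c-1$ tokens on $z_1,\dots,z_{c-1}$ can never move and are never touched. Indeed, since each $z_i$ is universal, if a token sits on $z_i$ then no token on any other vertex may move onto $z_i$'s neighborhood without creating an edge inside the set — wait, more carefully: the $z_i$'s are mutually adjacent and universal, so the set $\{z_1,\dots,z_{c-1}\}$ always occupies a $(c-1)$-clique. The remaining tokens, say $k$ of them, must induce a subgraph $H$ with $\chi(G^{(c)}[\{z_i\}\cup V(H)]) \le c$; since $\{z_1,\dots,z_{c-1}\}$ is a clique fully joined to $V(H)$, we get $\chi = (c-1) + \chi(H)$, forcing $\chi(H)\le 1$, i.e. the $k$ remaining tokens must at all times form an \emph{independent set} in $G_f$. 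Moreover a token can only leave $z_i$ by sliding to a neighbor, but every neighbor of $z_i$ is adjacent to every $z_j$ ($j\ne i$) and to the $k$ other tokens' positions whenever... actually the cleanest argument: sliding a token off $z_i$ to some vertex $w\in V(G_f)$ would make the set contain $w$ together with all $z_j$, $j\ne i$, which is fine colorwise, but then we have only $c-2$ universal vertices left and the $k+1$ non-universal tokens including $w$ must be $2$-colorable; this could in principle happen, so I need the $z_i$'s to be slightly more rigid. The fix is standard: attach to each $z_i$ a private pendant clique (or simply make the $z_i$ have a private neighbor that is adjacent to nothing else and is already ``blocked''), or — simplest — observe that we can instead give each $z_i$ a large private clique of size $c$ as ``anchor'' so that $z_i$ sliding anywhere would overload the color budget locally; I will choose whichever anchoring gadget makes the argument cleanest while preserving chordality.

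With the $z_i$'s frozen, $G^{(c)}[V(G_f)]$ restricted to the movable tokens behaves \emph{exactly} as the $1$-\textsc{Colorable} (independent set) reconfiguration instance on $G_f$: a move is legal in $G^{(c)}$ iff it keeps the non-universal tokens independent, iff it is a legal TS move for the original \textsc{Independent Set Reconfiguration} instance on $G_f$. Hence $S^{(c)}$ is $c$-reachable from $T^{(c)}$ in $G^{(c)}$ if and only if $S_f$ is reachable from $T_f$ in $G_f$ under TS, which by Theorem~\ref{thm:split-TS} is PSPACE-hard. Membership in PSPACE is immediate (the configuration graph has at most $\binom{n}{k}$ nodes, and reachability is in NPSPACE $=$ PSPACE, or one cites the general fact that all these reconfiguration problems are in PSPACE). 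The reduction is clearly polynomial-time.

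The main obstacle, and the only place requiring real care, is pinning down the freezing of the padding vertices $z_1,\dots,z_{c-1}$ so that \emph{no} intermediate configuration can cheat by moving one of them and temporarily exploiting a reduced universal-clique size to gain extra colors among the $G_f$-tokens. My plan is to handle this by anchoring each $z_i$ with a private $K_{c}$ gadget (a clique of $c$ fresh vertices attached only to $z_i$): then any configuration reachable from $S^{(c)}$, which never places tokens on those anchor cliques, keeps each $z_i$ effectively immovable because sliding $z_i$ onto its anchor clique or anywhere in $G_f$ is checked against the color bound — I will verify the short case analysis that every such move is either color-infeasible or leads to a dead-end from which $T^{(c)}$ is unreachable, or, more robustly, I will design the gadget so these moves are outright color-infeasible. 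Once that lemma is in place, the forward and backward directions of the equivalence are routine, mirroring the ``frozen tokens'' style of argument already used implicitly in Section~\ref{sec:PSPACE-split}. Finally I would double-check that for $c=2$ the construction degenerates to a single universal vertex plus anchor, which indeed is still chordal and still forces the $G_f$-tokens to stay independent, so the statement holds uniformly for all $c\ge 2$.
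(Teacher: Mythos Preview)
Your plan has a genuine gap that no choice of anchoring gadget will close. The difficulty you identify --- that a token might leave some $z_i$ --- is fatal, because such a move is \emph{always} color-feasible, independently of any anchor: sliding the token from $z_i$ to any $v\in K\subseteq V(G_f)$ yields the set $(S_f\cup\{v\})\cup\{z_j:j\neq i\}$ (plus whatever untokened anchors you have added), and since $S_f$ lies in the independent side of the split graph $G_f$ we have $\omega(G_f[S_f\cup\{v\}])\le 2$, so the largest clique in the new set has size at most $(c-2)+2=c$. Once $z_i$ is empty, the $G_f$-tokens need only stay $2$-colorable; but $S_f\cup\{v\}$ and $T_f\cup\{w\}$ (for any $w\in K$) each have exactly one token in $K$, so Lemma~\ref{lem:smallconnected} applied with $c=2$ guarantees they are $2$-reachable inside $G_f$, and every intermediate $2$-colorable set $R$ gives a $c$-colorable set $R\cup\{z_j:j\neq i\}$ in $G^{(c)}$. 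Finally $w$ slides back to $z_i$, landing on $T^{(c)}$. Thus every instance produced by your reduction is a YES instance, and the dead-end alternative you mention does not occur. (A side remark: without the anchors your $G^{(c)}$ is actually split for every $c$, since the $z_i$'s simply enlarge the clique side, so Theorem~\ref{thm:xpalg} already puts those instances in $n^{O(c)}$ time.)

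The paper's construction avoids global padding altogether. For every edge $uv$ of the split graph $G$ it attaches $|V(G)|$ pairwise disjoint $(c-1)$-cliques $W_{uv}^1,\ldots,W_{uv}^{|V(G)|}$, each completely joined to $\{u,v\}$, and places tokens on all of them; the new vertices are simplicial, so chordality is preserved. No token needs to be frozen. Instead a pigeonhole argument does the work: since at most $|V(G)\setminus S|<|V(G)|$ tokens are ever absent from the current set, for each edge $uv$ at least one $W_{uv}^i$ is entirely present, providing a $(c-1)$-clique adjacent to both $u$ and $v$; hence $u$ and $v$ cannot both carry tokens, and $R'\cap V(G)$ is forced to be independent throughout.
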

\begin{proof}

We provide a reduction from \textsc{Independent Set Reconfiguration} where the
input graph $G$ is restricted to be a split graph, which we proved to be
PSPACE-complete in Theorem \ref{thm:split-TS}. Let $G=(V,E)$ be an input split
graph for \textsc{Independent Set Reconfiguration}.  We construct a chordal
graph $G'$ as follows, starting from a graph isomorphic to $G$ and two
non-empty independents set $S,T$ of the same size. For every edge $uv\in E(G)$,
we add $|V(G)|$ sets of $c-1$ new vertices $W_{uv}^1,\ldots,W_{uv}^{|V(G)|}$,
such that $W_{uv}^i$ induces a clique for every $1\leq i\leq |V(G)|$, and every
vertex of $W_{uv}^i$ is made adjacent to both $u$ and $v$, for every $1\leq
i\leq |V(G)|$. In addition, we create a new set $S'=S\cup \bigcup_{uv\in E(G),
1\leq i\leq |V(G)|}W_{uv}^i$ and a set $T'=T\cup \bigcup_{uv\in E(G), 1\leq
i\leq |V(G)|}W_{uv}^i$.  In other words, we append $|V(G)|$ disjoint cliques of
size $c-1$ to every edge of $G$, and add all those newly created vertices to
$S$ and to $T$.  The chordality of $G'$ follows from the fact that the new
vertices of the sets $W_{uv}^i$ are all simplicial in $G'$, hence $G'$ is
chordal if and only if $G$ is chordal as well (and $G$ is split).

We now claim the following: given in independent set $T$ of $G$, the instance
$(G,S,T)$ of \textsc{Independent Set Reconfiguration} is a YES-instance if and
only if the instance $(G',S',T')$ of $c$-\textsc{Colorable Set Reconfiguration}
is a YES-instance as well.  Observe that, by the construction,
$S'$ and $T'$ are $c$-colorable because the maximum
clique in $G'[S']$ contains at most one vertex of $S$ and at most the $c-1$
vertices of a clique $W_{uv}^i$.

The forward direction of the previous claim follows easily: performing the
same moves as those of a reconfiguration sequence from $S$ to $T$ in $G'$,
starting from $S'$, yields a reconfiguration sequence where every step
preserves $c$-colorability, and produces the desired set $T'$.

For the backwards direction, we claim that, for any $c$-colorable set $R'$
reachable from $S'$, it holds that the vertices of $R' \cap V(G)$ are pairwise
non-adjacent. In other words, the tokens placed on original vertices of $G$
form an independent set. 
Indeed, observe that the number of vertices of $G'$ that do not belong to $R'$ satisfies 
$|V(G') \setminus R'| = |V(G) \setminus S| < |V(G)|$. This immediately implies that for
any set $R'$ and edge $uv\in E(G)$, we have $|R' \cap \bigcup_{1\leq i\leq |V(G)|}W_{uv}^i| \geq (c-2)|V(G)|+1$, 
and therefore $G[R' \cap\bigcup_{1\leq i \leq |V(G)|}W_{uv}^i]$ contains a clique of size $c-1$ as an induced subgraph,
i.e., one of the sets $W_{uv}^i$ is completely contained in $R'$. This implies
that, for every edge $uv$ of $G$, we have $|R'\cap \{u,v\}| \leq 1$, i.e., the
vertices of $R' \cap V(G)$ are pairwise non-adjacent, as desired. 
\end{proof}


\section{XP-time Algorithm on Split Graphs for fixed $c\geq 2$}\label{sec:xp}

In this section we present an $n^{O(c)}$ algorithm for $c$-\textsc{Colorable
Reconfiguration} under the TS rule, on split graphs, for $c>1$.  Recall that a
split graph $G=(V,E)$ is a graph whose vertex set $V$ is partitioned into a
clique $K$ and an independent set $I$.  An input instance consists of a split
graph $G$, and two $c$-colorable sets $S,T\subseteq V$. 

Before proceeding, let us give some high-level ideas as well as some intuition
why this problem, which is PSPACE-complete for $c=1$ (Theorem
\ref{thm:split-TS}), admits such an algorithm for larger $c$. Our algorithm
consists of two parts: a rigid and a non-rigid reconfiguration part. In the
rigid reconfiguration part the algorithm decides if two sets are reachable by
using moves that never slide tokens into or out of $I$. Because of this
restriction and the fact that the sets are $c$-colorable, the total number of
possible configurations is $n^{O(c)}$, so this part can be solved with
exhaustive search (this is similar to the algorithm of \cite{ItoO18} for
TJ/TAR).  In the non-rigid part we assume we are given two sets $S,T$ which, in
addition to being $c$-colorable, have $|S\cap K|,|T\cap K|\le c-1$. The main
insight is now that \emph{any} two such sets are reachable via TS moves (Lemma
\ref{lem:smallconnected} below). Informally, the algorithm guesses a partition
of the optimal reconfiguration into a rigid prefix, a rigid suffix, and a
non-rigid middle, and uses the two parts to calculate each independently.

The intuitive reason that our algorithm cannot work for $c=1$ is the non-rigid
part.  The crucial Lemma \ref{lem:smallconnected} on which this part is based
fails for $c=1$: for instance, if $G$ is a star with three leaves and $S,T$ are
two distinct sets each containing two leaves, then $S,T$ satisfy all the
conditions for $c=1$, but are not reachable from each other with TS moves.
Such counterexamples do not, however, exist for higher $c$, because for sets
that satisfy the conditions of Lemma \ref{lem:smallconnected} we know we can
always freely move tokens around inside the clique (and without loss of
generality, such tokens exist). Note also, that this difficulty is specific to
the TS rule: the algorithm of \cite{ItoO18} implicitly uses the fact that any
two sets with $c-1$ tokens in the clique are always reachable, as this is an
almost trivial fact if one is allowed to use TJ moves. Thus, Lemma
\ref{lem:smallconnected} is the main new ingredient that makes our algorithm
work.

Let us now proceed with a detailed description of the algorithm. First, let us
fix some notation. For a vertex set $R\subseteq V$, we write the subsets $R\cap K$
and $R\cap I$ as $R_K$ and $R_I$ respectively.
%
%
Throughout this section, we assume that 
input graph $G = (K \cup I, E)$ is connected (and thus each vertex in $I$ has a neighbor in $K$);
otherwise we can consider instances induced by each component separately.  

%
%
%
%
%

\begin{lemma}\label{lem:smallconnected}
Let $G$ be a split graph, $c\ge2$, and $S,T\subseteq V$ be two $c$-colorable sets such that $|S_K|,|T_K|\leq c-1$. 
Then $T$ is $c$-reachable from $S$. Furthermore, a reconfiguration sequence from $S$ to $T$ 
can be produced in polynomial time.
\end{lemma}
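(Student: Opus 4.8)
The plan is to reach a convenient ``canonical'' configuration from both $S$ and $T$ and then observe that the canonical configurations coincide (or are trivially inter-reachable), so reachability follows by reversibility of TS moves. Concretely, I would first argue that from any $c$-colorable set $R$ with $|R_K|\le c-1$ one can, using only TS moves through $c$-colorable sets, reach a configuration in which \emph{all} tokens sit in $K$ except possibly for tokens that have no choice but to stay in $I$ --- but since $G$ is connected and $|K|\ge $ (current number of tokens) is not guaranteed, a cleaner target is: push every token into $K$ one at a time. The key enabling fact is that throughout this process we always keep a ``spare'' slot in $K$: since $|R_K|\le c-1$ and $K$ is a clique, $R_K$ uses at most $c-1$ colors of its own, so a token sliding from a vertex $v\in I$ into an empty clique vertex $w\in N(v)\cap K$ yields a set whose clique part has size $\le c$, which is $c$-colorable because a clique of size $c$ needs exactly $c$ colors and the one token left in $I$ adjacent issues do not matter (the induced subgraph on $R_K\cup\{w\}$ is a clique of size $\le c$, and any vertex still in $I$ only adds pendant constraints). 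More carefully: while there is still a token in $I$, pick such a token on $v$; if $N(v)\cap K$ contains a vertex not currently occupied and not adjacent (inside the independent-set side) — every clique vertex is available unless occupied — slide the token there. The only obstruction is that every neighbor of $v$ in $K$ is occupied; I handle this by first sliding one of those occupying tokens elsewhere in $K$ (possible since $|R_K|\le c-1<|K|$ cannot be assumed, so I instead move it out to $I$ temporarily if $|R_K|=|K|$, but $|R_K|\le c-1$ and if $|K|\le c-1$ too then $K$ is tiny and a direct argument applies).

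The main obstacle is exactly this bookkeeping: guaranteeing that at every step the clique is not ``full'' and that we never violate $c$-colorability when a token is temporarily in $K$ together with up to $c-1$ others. I would resolve it by maintaining the invariant that at most $c-1$ of the ``structural'' tokens live in $K$, moving tokens into $K$ one at a time so that the clique part momentarily reaches size $c$ only for a single token that is immediately the subject of the next slide, or — more robustly — by proving the following normal form: any $R$ with $|R_K|\le c-1$ can reach a set $R^\star$ where $R^\star_I$ is an \emph{inclusion-minimal} choice, i.e. every token in $I$ has all its $K$-neighbors occupied by other tokens of $R^\star$. Two such normal forms, one obtained from $S$ and one from $T$, have the same size; using the freedom to permute the (at most $c$) tokens sitting in $K$ arbitrarily (a clique is reconfiguration-connected for any fixed number of tokens up to its size, and $c$-colorability is automatic there) and to exchange which $I$-vertex a ``forced'' token occupies, I can show $S$ reaches $R^\star$ and $T$ reaches the same $R^\star$ (or one reachable from it by clique permutations), whence $S$ and $T$ are inter-reachable.

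For the polynomial-time claim, I would note that each phase — emptying $I$ as much as possible from $S$, rearranging inside $K$, then replaying the reverse of the analogous phase for $T$ — uses $O(n)$ slides per token and $O(n)$ tokens, so $O(n^2)$ moves total, each computable in polynomial time by a straightforward greedy choice of an available clique vertex (and, when none is available, a single auxiliary slide to free one). I expect the write-up to spend most of its effort on the case analysis for when a token in $I$ is ``stuck'' because its clique-neighborhood is saturated, and on verifying that the intermediate set of size-$c$ clique part is genuinely $c$-colorable — which it is, since the graph induced by $R_K\cup\{w\}$ together with the remaining $I$-tokens is a split graph whose clique number is $\le c$, and split (indeed chordal) graphs are perfect, so $\chi=\omega\le c$.
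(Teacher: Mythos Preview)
Your plan has genuine gaps.

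The primary strategy---push tokens into $K$ to reach a canonical form, then argue the canonical forms from $S$ and from $T$ coincide---breaks down in two places. First, when $|S|>c$ you cannot place all tokens in $K$ while staying $c$-colorable, so ``all tokens in $K$'' is not an attainable target. You fall back to an inclusion-minimal $R^\star_I$, but then showing that the minimal configuration reached from $S$ and that reached from $T$ are inter-reachable is essentially the original lemma; your one-line justification (``exchange which $I$-vertex a forced token occupies'') begs the question of how to move a token whose entire clique-neighborhood is occupied.

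Second, your $c$-colorability check is wrong. You assert that after sliding $v\in I$ to $w\in K$ the resulting set is $c$-colorable because ``any vertex still in $I$ only adds pendant constraints'' and later because the resulting split graph has ``clique number $\le c$''. But a vertex $u\in R_I\setminus\{v\}$ together with $N(u)\cap(R_K\cup\{w\})$ is a clique; if $u$ already had $c-1$ neighbors in $R_K$ and is also adjacent to $w$, this clique has size $c+1$. So the intermediate set need not be $c$-colorable, and the assertion $\omega\le c$ is exactly what requires proof.

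The paper's argument is far simpler and avoids both issues: instead of passing through a canonical form, it directly shrinks $|S_I\setminus T_I|$ by transforming $T$. Pick $v\in S_I\setminus T_I$. If $T_K\neq\emptyset$, slide some token of $T_K$ to $v$ (at most two moves through $K$); if $T_K=\emptyset$, slide some $u\in T_I\setminus S_I$ through $K$ to $v$---this is where $c\ge2$ is used, since a single clique token keeps $\omega\le2\le c$. In either case the new $T$ still has $|T_K|\le c-1$, so one can iterate until $S_I=T_I$, after which a trivial rearrangement inside $K$ (never exceeding $c-1$ clique tokens) finishes. Every intermediate set has at most $c-1$ tokens in $K$, so $\omega\le c$ is immediate and no saturation case analysis is needed.
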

\begin{proof}
We first observe that if $S_I=T_I$, then there is an easy optimal $c$-transformation. By making one TS move from $u\in S_K\setminus T_K$ to $v\in T_K\setminus S_K$, 
one can $c$-transform $S$ to $T$ with $|S\setminus T|$ sliding moves (thus yielding an optimal reconfiguration sequence). 
It is clear that all the sets resulting from these TS moves are $c$-colorable because each of them has at most $c-1$ vertices in $K$. 

Therefore, it suffices to show that there is always a $c$-transformation of 
$T$ which decrease $|S_I \setminus T_I|$ as long as $S\neq T$.  Note that we
can assume that there exists $v\in S_I\setminus T_I$ (otherwise we exchange the
roles of $S$ and $T$).  In the case when $T_K=\emptyset$, one can transform $T$
to $T'$ with TS moves from a vertex of $T_I\setminus S_I$ to $v$. Trivially
this is a $c$-transformation, and it holds that $|T'_K|=\emptyset$. (Note that
this argument would not be valid if $c=1$).  If $T_K\neq \emptyset$, then one
can make at most two TS moves from a vertex of $T_K$ to $v$. Because $T$ has at
most $c-1$ vertices and these TS moves maintain at most $c-1$ vertices in $K$,
$c$-colorability of $T$ is preserved.  Moreover, the new set has at most $c-1$
vertices in $K$ while its intersection with $S$ in $I$ is strictly larger.
This completes the proof of the first statement. The proof is constructive and
easily translates to a polynomial-time algorithm.  \end{proof}

Let us now introduce a notion that will be useful in our algorithm. For two
$c$-colorable sets $S,T$ with $S_I=T_I$ we say that $S$ has a \emph{rigid}
$c$-transformation to $T$ if there exists a valid $c$-transformation from $S$
to $T$ with TS moves which also has the property that every $c$-colorable set
$R$ of the transformation has $R_I=S_I$.

\begin{lemma}\label{lem:rigid} Given a split graph $G=(V,E)$, with $V=K\cup I$,
and two $c$-colorable sets $S,T\subseteq V$ with $S_I=T_I$, there is an
algorithm that decides if there exists a rigid $c$-transformation of $S$ to $T$
in time $n^{O(c)}$.  \end{lemma}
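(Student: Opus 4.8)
The plan is to reduce the existence of a rigid $c$-transformation to an ordinary reachability question in an auxiliary graph $H$ of size $n^{O(c)}$, which is then settled by breadth-first search. The first observation is that in any rigid $c$-transformation every configuration $R$ satisfies $R_I = S_I$ by definition, so $|R_K| = |R| - |R_I| = |S| - |S_I| = |S_K| =: k$ for every $R$ in the sequence; moreover $k \le c$, since $S$ is $c$-colorable and $K$ is a clique, and $|T_K| = |S| - |T_I| = |S| - |S_I| = k$ as well. Hence the $I$-part is frozen and the only thing that varies along a rigid transformation is a $k$-element subset of $K$.

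Next I would characterize exactly which $k$-subsets $A \subseteq K$ yield a $c$-colorable set $A \cup S_I$. Since $G$, and hence every induced subgraph of it, is chordal and therefore perfect, $\chi(G[A\cup S_I]) = \omega(G[A\cup S_I])$. Because $I$ is independent, any clique of $G[A\cup S_I]$ is a subset of $A$ together with at most one vertex of $S_I$, so $\omega(G[A\cup S_I]) = \max\bigl(|A|,\ \max_{v\in S_I}(1 + |A\cap N(v)|)\bigr)$. As $|A| = k \le c$, the set $A\cup S_I$ is $c$-colorable if and only if $|A\cap N(v)| \le c-1$ for every $v\in S_I$; call a $k$-subset of $K$ \emph{good} when it meets this condition.

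Now I would define $H$ to be the graph whose vertices are the good $k$-subsets of $K$, with $A$ adjacent to $A'$ whenever $|A\setminus A'| = |A'\setminus A| = 1$. The crucial point — and the only place the split structure is used — is that $K$ is a clique, so the two exchanged vertices are automatically adjacent in $G$; thus every edge of $H$ is realized by a genuine TS move, and conversely any TS move occurring inside a rigid transformation slides a token between two vertices of $K$ (it cannot touch $I$, by rigidity) and therefore moves between two good $k$-subsets. Consequently a rigid $c$-transformation from $S$ to $T$ exists if and only if $T_K$ is reachable from $S_K$ in $H$. The algorithm builds $H$ and runs BFS from $S_K$: since $|V(H)| \le \binom{|K|}{k} \le \binom{n}{c} = n^{O(c)}$, and both testing goodness and listing the at most $O(n^2)$ candidate neighbors of a vertex take polynomial time, the whole procedure runs in $n^{O(c)}$ time.

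The part requiring the most care is the equivalence between rigid transformations and walks in $H$: one must confirm that a rigid transformation can never leave the configuration space $H$ (this is exactly the size bound $|R_K| = k$ together with the colorability characterization of the second paragraph) and, conversely, that every $H$-edge can be traversed by an honest TS move (the clique observation). Everything else — the enumeration of vertices, the goodness test, and the BFS — is routine and immediately gives the claimed $n^{O(c)}$ bound.
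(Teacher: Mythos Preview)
Your proposal is correct and follows essentially the same approach as the paper: build an auxiliary graph on the $n^{O(c)}$ configurations that agree with $S_I$ on the independent side, connect configurations reachable by a single TS move, and solve reachability by BFS. You add more detail than the paper does---in particular the explicit goodness test via $\omega(G[A\cup S_I])$ using perfectness, and the observation that $K$ being a clique is what makes every exchange a TS move---but the underlying idea is identical.
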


\begin{proof}

The main observation is that since all intermediate sets must have $R_I=S_I$,
we are only allowed to slide tokens inside $K$. However, $S_K$ contains at most
$c$ vertices (as it is $c$-colorable), therefore, there are at most $n^c$
potentially reachable sets: one for each collection of $|S_K|$ vertices of the
clique.

We now construct a secondary graph with a node for each subset of $V$ that
contains $|S_K|$ vertices of $K$ and the vertices of $S_I$, and connect two
such nodes if their corresponding sets are reachable with a single TS move in
$G$. In this graph we check if there is a path from the node that represents
$S$ to the one that represents $T$ and if yes output the sets corresponding to
the nodes of the path as our rigid reconfiguration sequence.  \end{proof}

\begin{theorem}\label{thm:xpalg}

There is an algorithm that decides $c$-\textsc{Colorable Reconfiguration} on
split graphs under the TS rule in time $n^{O(c)}$, for $c\ge2$.

\end{theorem}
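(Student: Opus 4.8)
The plan is to combine the two building blocks just established — the rigid-reconfiguration test of Lemma~\ref{lem:rigid} and the ``free movement'' guarantee of Lemma~\ref{lem:smallconnected} — into a single algorithm by guessing the structure of an optimal reconfiguration sequence. The key structural observation is that in any reconfiguration from $S$ to $T$, the number of tokens in the clique $K$ changes by at most $1$ per move, so if at some point the sequence uses a configuration $R$ with $|R_K|\le c-1$, then the whole sequence splits into three phases: a first phase from $S$ to some configuration $S^\star$ with $|S^\star_K|\le c-1$, a middle phase that stays among configurations with at most $c-1$ clique tokens, and a final phase from some $T^\star$ with $|T^\star_K|\le c-1$ back to $T$. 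The middle phase is always realizable by Lemma~\ref{lem:smallconnected}, so we only need to worry about the outer phases. The remaining case is when \emph{every} configuration in the sequence has exactly $c$ tokens in $K$; but then no token ever enters or leaves $I$, so $S_I=T_I$ and the whole thing is a rigid transformation handled directly by Lemma~\ref{lem:rigid}.

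First I would dispose of the rigid case: if $S_I = T_I$, run the algorithm of Lemma~\ref{lem:rigid} to test for a rigid transformation, and also (to be safe) fall through to the general procedure below, since a non-rigid solution might also exist. For the general case, the algorithm guesses a configuration $S^\star$ reachable from $S$ with $|S^\star_K|\le c-1$, and a configuration $T^\star$ reachable from $T$ with $|T^\star_K|\le c-1$; by Lemma~\ref{lem:smallconnected}, $S^\star$ and $T^\star$ are $c$-reachable from each other, so it suffices to decide reachability of $S^\star$ from $S$ and of $T^\star$ from $T$. To decide whether such an $S^\star$ exists and is reachable from $S$, I build a ``configuration graph'' whose nodes are all $c$-colorable sets that agree with $S$ on $I$ except possibly for one vertex having slid between $I$ and $K$ — more precisely, I do a breadth-first search in the space of $c$-colorable configurations, but restricted to moves that do not decrease $|R_K|$ below $|S_K|$ or, dually, I search among configurations whose symmetric difference with $S$ in $I$ is small. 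The cleanest formulation: the set of configurations reachable from $S$ using only moves that keep $|R_K| = |S_K|$ (all of which are rigid and thus number $n^{O(c)}$) is computable; from any of those, one slide can move a token out of $K$ into $I$ (if $|S_K| = c$) or we are already done (if $|S_K| \le c-1$); iterating, the reachable configurations with $|R_K|$ decreasing monotonically from $|S_K|$ toward $c-1$ form a polynomially-describable family of size $n^{O(c)}$, explorable by an adaptation of Lemma~\ref{lem:rigid}'s search.

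The main obstacle — and the part requiring the most care — is formalizing exactly which configurations the algorithm must enumerate so that (a) the count stays $n^{O(c)}$ and (b) no solution is missed. The clean way is: call a configuration $R$ \emph{semi-rigid for $S$} if $|R_K|\le c$ and $R_I \supseteq S_I \setminus (\text{a bounded set})$; the correct invariant is that in a shortest path from $S$ to the first configuration with $\le c-1$ clique tokens, the set $R_I$ only shrinks (tokens only leave $I$, never enter, until the target clique-size is reached) — this is because once $|R_K|\le c-1$ we stop, and reaching that can be done monotonically. Hence every relevant $R$ has $R_I\subseteq S_I$ with $|S_I\setminus R_I|\le |S_K| \le c$, giving $\binom{|S_I|}{\le c}\cdot n^{O(c)} = n^{O(c)}$ configurations total. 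I would state this monotonicity as a small claim, prove it by a standard exchange/shortcutting argument on a shortest sequence, build the corresponding search graph of size $n^{O(c)}$, do BFS from $S$ and symmetrically from $T$, and accept iff the reachable-from-$S$ family and the reachable-from-$T$ family each contain at least one configuration with at most $c-1$ clique tokens (together with the rigid check when $S_I=T_I$). Correctness in the forward direction is immediate from Lemma~\ref{lem:smallconnected}; the backward direction follows by taking any reconfiguration sequence, locating the first and last configurations with $\le c-1$ clique tokens (or invoking Lemma~\ref{lem:rigid} if none exist), and observing that the prefix and suffix are captured by the two BFS searches. The running time is $n^{O(c)}$ since every phase enumerates $n^{O(c)}$ configurations and checks $O(1)$-move adjacency.
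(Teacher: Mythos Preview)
Your high-level strategy is the same as the paper's: split any reconfiguration sequence into a rigid prefix (from $S$), a rigid suffix (to $T$), and a middle part connecting two configurations with at most $c-1$ clique tokens, which is always realizable by Lemma~\ref{lem:smallconnected}. The paper organizes this as a three-way case split on $(|S_K|,|T_K|)$ and, in the hard case $|S_K|=c$, guesses the last configuration $T_{i-1}$ with $c$ clique tokens together with the next configuration $T_i$; it then checks rigid reachability of $T_{i-1}$ from $S$ via Lemma~\ref{lem:rigid} and handles the rest by Lemma~\ref{lem:smallconnected} (or a recursion to the previous case).

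Where your writeup goes wrong is the ``monotonicity'' claim. You assert that along a shortest path from $S$ to the first configuration with $\le c-1$ clique tokens, $R_I$ only \emph{shrinks} (tokens only leave $I$). This is backwards: if $|S_K|=c$, then a move from $I$ to $K$ would give $c+1$ clique tokens and violate $c$-colorability, so tokens \emph{never} leave $I$ in the prefix; every move before the first drop is entirely inside $K$, hence $R_I=S_I$ throughout and the prefix is literally a rigid transformation. The transition step itself moves one token from $K$ into $I$, so $R_I$ \emph{grows} by one, not shrinks. Consequently your proposed search space ``$R_I\subseteq S_I$ with $|S_I\setminus R_I|\le c$'' is the wrong family; the correct family is simply $\{R: R_I=S_I,\ |R_K|=c\}$ together with one outgoing TS move from each such $R$, which is exactly what Lemma~\ref{lem:rigid} plus an $n^{O(1)}$-guess of the transition move gives you. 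You had this right in your earlier paragraph (``all of which are rigid and thus number $n^{O(c)}$''); the later attempt to re-justify it via monotone shrinking of $R_I$ is both unnecessary and incorrect, and if taken at face value would make the BFS miss the actual target configurations. Drop the monotonicity argument, keep the rigid-prefix observation, and your proof matches the paper's.
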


\begin{proof}

We distinguish the following cases: (i) $|S_K|,|T_K|\le c-1$, (ii) $|S_K|=c$
and $|T_K|=c-1$, (iii) $|S_K|=|T_K|=c$. This covers all cases since $S,T$ are
$c$-colorable and we can assume without loss of generality that $|S_K|\ge
|T_K|$.

For case (i) we invoke Lemma \ref{lem:smallconnected}. The answer is always
Yes, and the algorithm of the lemma produces a feasible reconfiguration
sequence.

For case (ii), suppose there exists a reconfiguration sequence from $S$ to
$T$, call it $T_0=S, T_1,\ldots, T_{\ell}=T$. Let $i$ be the smallest index
such that $|T_i\cap K|\le c-1$. Clearly such an index exists, since $|T_K|\le
c-1$. We now guess the configuration $T_{i-1}$ and the configuration $T_i$
(that is, we branch into all possibilities). Observe that there are at most
$n^c$ choices for $T_{i-1}$ as we have $T_{i-1}\cap I = S_I$ and $|T_{i-1}\cap
K|=c$.  Furthermore, once we have selected a $T_{i-1}$, there are $n^{O(1)}$
possibilities for $T_i$, as $T_i$ is reachable from $T_{i-1}$ with one TS move.

We observe that if we guessed correctly, then there exists a rigid
$c$-transformation from $S$ to $T_{i-1}$ (by the minimality of $i$ and the fact
that $|S_K|=c$); we use the algorithm of Lemma \ref{lem:rigid} to check this.
Furthermore, the configuration $T_i$ is always transformable to $T$ by Lemma
\ref{lem:smallconnected}. Therefore, if the algorithm of Lemma \ref{lem:rigid}
returns a solution, then we have a $c$-transformation from $S$ to $T$.
Conversely, if a $c$-transformation from $S$ to $T$ exists, since we tried all
possibilities for $T_{i-1}$, one of the branches will find it.

Finally, for case (iii), if $S_I=T_I$ we first use Lemma \ref{lem:rigid} to
check if there is a rigid $c$-transformation from $S$ to $T$. If one is found,
we are done. If not, or if $S_I\neq T_I$ we observe that, similarly to case
(ii), in any feasible transformation  $T_0=S, T_1,\ldots, T_{\ell}=T$, there
exists an $i$ such that $|T_i\cap K|\le c-1$ (otherwise the transformation
would be rigid). Pick the minimum such $i$. We now guess the configurations
$T_{i-1},T_i$ (as before, there are $n^{c+O(1)}$ possibilities) and use Lemma
\ref{lem:rigid} to verify that $T_{i-1}$ is reachable from $S$. If $T_{i-1}$ is
reachable from $S$, we need to verify that $T$ is reachable from $T_i$.
However, we observe that this reduces to case (ii), because $|T_i\cap K|\le
c-1$, so we proceed as above.  If the algorithm returns a valid sequence we
accept, while we know that if a valid sequence exists, then there exists a
correct guess for $T_{i-1},T_i$ that we consider.  \end{proof}

\section{W-hardness for Split Graphs}

In this section we show that $c$-\textsc{Colorable Reconfiguration} on split
graphs  is W[2]-hard parameterized by $c$ and the length $\ell$ of
the reconfiguration sequence under all three reconfiguration rules (TAR, TJ,
and TS). In this sense, this section complements Section
\ref{sec:xp} by showing that the $n^{O(c)}$ algorithm that we presented
for $c$-\textsc{Colorable Reconfiguration} on split graphs cannot be
significantly improved under standard assumptions.

We will rely on known results on the hardness of \textsc{Dominating Set
Reconfiguration}. We recall that in this problem we are given a graph
$G=(V,E)$, two dominating sets $S,T\subseteq V$ of size at most $k$ and are
asked if we can transform $S$ into $T$ by a series of TAR operations while
keeping the size of the current set at most $k$ at all times.  More formally,
we are asked if there exists a sequence $T_0=S, T_1,\ldots, T_{\ell}=T$ such
that for each $i\in \{0,\ldots,\ell-1\}$, $|T_i|\le k$, $T_i$ is a dominating
set of $G$, and $|(T_i\setminus T_{i+1}) \cup (T_{i+1}\setminus T_i)|=1$.

\begin{theorem}[\cite{MouawadN0SS17}]\label{thm:dshard1} \textsc{Dominating Set
Reconfiguration} is W[2]-hard parameterized by the maximum size of the allowed
dominating sets $k$ and the length $\ell$ of the reconfiguration sequence under
the TAR rule.

\end{theorem}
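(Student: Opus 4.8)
The plan is to give a parameterized reduction from \textsc{Dominating Set}, which is W[2]-complete when parameterized by the size $k$ of the sought solution. Starting from an instance $(G,k)$ with $G=(V,E)$, I would build in polynomial time a graph $G'$, two dominating sets $S,T$ of $G'$, an upper bound $k'=k+O(1)$ on the size of intermediate dominating sets, and a bound $\ell=O(k)$ on the length of the reconfiguration sequence, so that the produced instance is a yes-instance of \textsc{Dominating Set Reconfiguration} (with all configurations of size at most $k'$ and the sequence of length at most $\ell$) if and only if $G$ has a dominating set of size at most $k$. Since the new parameters depend only on $k$, this transfers W[2]-hardness, and because $\ell=O(k)$ is bounded as well, it gives hardness for the combined parameter ``threshold plus sequence length'' (and, tracking gadget sizes, also the ETH-type lower bound quoted later).

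The graph $G'$ keeps $V$ untouched and attaches two ``control'' gadgets: a source gadget, all of whose vertices lie in $S$ and which, when fully present in the current set, dominates all of $V$ (so that $S$ is a dominating set of $G'$ using no vertex of $V$), and a symmetric target gadget for $T$. Pendant vertices are hung on selected control vertices so that every dominating set of $G'$ is forced to contain certain control vertices, which rigidly constrains the shape of every configuration along a reconfiguration. The two properties I would need to engineer are: (i) the source and target gadgets are ``heavy'' enough that no dominating set of size at most $k'$ can contain enough of both of them to dominate $V$ through the gadgets alone — this blocks the obvious ``keep both hubs alive simultaneously'' shortcut; and (ii) the only way for a configuration to stop relying on the source gadget to cover $V$ (so that source-gadget vertices may start to be deleted) is for that configuration to already contain at most $k$ vertices of $V$ forming a dominating set of $G$. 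Realising (i) and (ii) at the same time, while keeping $k'$ and $\ell$ at $k+O(1)$, is the delicate part of the construction.

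Granting such a $G'$, correctness splits in the usual way. The forward direction is routine: given a dominating set $D^\star$ of $G$ with $|D^\star|\le k$, I would write down an explicit TAR sequence — add the vertices of $D^\star$ one by one (the slack $k'-k=O(1)$ leaves exactly enough room), then, with $V$ now dominated independently of the gadgets, swap the source gadget out and the target gadget in one vertex at a time, and finally delete $D^\star$ — verifying that every intermediate set is a dominating set of $G'$ of size at most $k'$ and that the number of moves is $O(k)$. The backward direction is where the work is and where the main obstacle lies: from any valid reconfiguration from $S$ to $T$, one considers the last configuration still containing the ``essential'' source-gadget vertex, and must then argue, using the pendant constraints and the threshold $k'$, that within the next constantly many moves the current set necessarily contains at most $k$ vertices of $V$ that dominate $V$. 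Making this airtight means ruling out every way a sequence could cheat — for instance leaning on leftover control vertices to cover $V$ in place of a genuine size-$k$ dominating set, or shuttling budget back and forth between the two gadgets — and it is precisely these potential cheats that dictate the exact gadget design. With that in hand the reduction is complete, and since \textsc{Dominating Set} parameterized by $k$ is W[2]-hard, so is \textsc{Dominating Set Reconfiguration} parameterized by $k+\ell$ under the TAR rule.
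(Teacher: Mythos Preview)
The paper does not actually prove this theorem; it is quoted verbatim as a result of Mouawad, Nishimura, Raman, Simjour, and Suzuki \cite{MouawadN0SS17}, with no proof given. The only information the present paper supplies about the underlying argument appears later, inside the proof of Corollary~\ref{cor:dshard2}, where it is recalled that ``the reduction establishing Theorem~\ref{thm:dshard1} in \cite{MouawadN0SS17} is a reduction from $t$-\textsc{Dominating Set} that sets $k,\ell=O(t)$'' and that in the produced instances $k=\Theta(\ell)$ and reachability within $\ell$ moves is equivalent to reachability at all.

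Your high-level plan is therefore consistent with what the paper says about the cited reduction: a parameterized reduction from \textsc{Dominating Set} in which the threshold and the sequence length are both linear in the source parameter. In that sense you have correctly identified the shape of the argument. What you have written, however, is an outline rather than a proof: you explicitly flag the gadget design realising properties (i) and (ii) as ``the delicate part'' and leave the backward direction as something that ``means ruling out every way a sequence could cheat'' without actually doing so. Those are exactly the places where the content of the proof lives, so as it stands your proposal is a plausible blueprint but not a self-contained argument. Since the present paper offers nothing to compare against beyond the citation and the parameter bookkeeping just mentioned, there is no further comparison to make here; for the actual construction you would need to consult \cite{MouawadN0SS17}.
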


Before proceeding, let us make two remarks on Theorem \ref{thm:dshard1}: first,
because the reduction of \cite{MouawadN0SS17} is linear in the
parameters, it is not hard to see that it also implies a tight ETH-based lower
bound based on known results for \textsc{Dominating Set}; second, using an
argument similar to that of Theorem 1 of \cite{KaminskiMM12}, the same hardness
can be obtained for the TJ rule.

\begin{corollary}\label{cor:dshard2} \textsc{Dominating Set Reconfiguration} is
W[2]-hard parameterized by the maximum size of the allowed dominating sets $k$
and the length $\ell$ of the reconfiguration sequence under the TAR, or TJ
rule.  Furthermore, the problem does not admit an algorithm running in
$n^{o(c+\ell)}$ under the ETH for any of the two rules. \end{corollary}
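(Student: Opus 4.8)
The plan is to derive both the W[2]-hardness and the ETH lower bound for the TJ rule directly from Theorem~\ref{thm:dshard1} by a parameter-preserving reduction from TAR to TJ \textsc{Dominating Set Reconfiguration}, mimicking the standard argument relating these two rules (as in Theorem~1 of \cite{KaminskiMM12}). First I would recall the relationship: a TAR sequence with threshold $k$ alternates between sets of size $\le k$, and the classical trick is that a TJ move (swap one vertex for another) can be simulated by two TAR moves (add, then remove), while conversely a TAR sequence can be ``compressed'' into a TJ sequence by pairing up additions and removals. The mild technical point is that the TAR and TJ formulations fix the cardinality differently — in TJ all sets have exactly the same size, say $s$, whereas in TAR all sets have size between some lower bound and $k$ — so the reduction should normalize the two input dominating sets $S,T$ to have a common size $s$ by padding (adding arbitrary vertices, which cannot destroy domination), and set the TJ token count to $s$ while relating $s$ and $k$ so that the blow-up in both parameters is linear.

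The key steps, in order: (1) Start from an instance $(G,S,T,k)$ of TAR \textsc{Dominating Set Reconfiguration} produced by the reduction of \cite{MouawadN0SS17}; since that reduction is linear, $k$ and the solution length $\ell$ are both linear in the parameter of the source \textsc{Dominating Set} (or \textsc{Set Cover}) instance, and also linear in a parameter whose ETH lower bound is $n^{\Omega(\text{parameter})}$ by the known ETH hardness of \textsc{Dominating Set}. (2) Produce a TJ instance $(G',S',T',s)$ with $s=\Theta(k)$ and $S',T'$ obtained from $S,T$ by padding to size $s$; here one may need to add to $G$ a small gadget (e.g.\ a clique or a set of universal vertices) guaranteeing there are always enough ``spare'' vertices to pad with while keeping every set a dominating set, exactly as in the $c=1$ case of our other reductions. (3) Argue the equivalence: a length-$\ell$ TAR reconfiguration yields a TJ reconfiguration of length $O(\ell)$ by merging consecutive add/remove pairs, and conversely a length-$\ell'$ TJ reconfiguration expands to a TAR reconfiguration of length $2\ell'$ with threshold $s$, up to the padding adjustment. (4) Conclude that TJ \textsc{Dominating Set Reconfiguration} is W[2]-hard parameterized by $s+\ell'$, and that it has no $n^{o(s+\ell')}$ algorithm under ETH, because an $n^{o(s+\ell')}$ algorithm would transfer back through the linear reduction to an algorithm for \textsc{Dominating Set} (or for TAR \textsc{Dominating Set Reconfiguration}) violating ETH. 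Renaming $s$ as $k$ gives the statement. The TAR part of the corollary is immediate: the ETH lower bound follows simply from the observation that the reduction of \cite{MouawadN0SS17} is linear in both $k$ and $\ell$, combined with the $n^{\Omega(k)}$ ETH lower bound for \textsc{Dominating Set} on the relevant parameter.

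I expect the main obstacle to be bookkeeping rather than any real difficulty: one must carefully verify that the TAR$\leftrightarrow$TJ translation preserves the property of being a \emph{dominating} set at every intermediate step (the add-then-remove simulation temporarily inflates the set by one, which is harmless for domination, but the reverse ``compression'' of an arbitrary TAR walk into a TJ walk must be done on a \emph{shortest} or suitably normalized walk so that additions and removals can be matched up without ever going below the domination requirement), and that the padding gadget does not accidentally create new reconfiguration sequences that short-circuit the reduction. Since all of these are exactly the manipulations already carried out in \cite{KaminskiMM12}, and the linearity of \cite{MouawadN0SS17} is explicit, both the parameterized and the ETH consequences follow without further ideas.
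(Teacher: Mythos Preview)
Your proposal is correct and follows essentially the same route as the paper: normalize $|S|,|T|$ by padding, translate between TAR and TJ via the standard argument of \cite{KaminskiMM12}, and read off the ETH bound from the linearity of the \cite{MouawadN0SS17} reduction together with the known $n^{\Omega(t)}$ lower bound for \textsc{Dominating Set}. Two minor remarks: no gadget is needed for padding, since adding arbitrary vertices of $G$ to a dominating set keeps it dominating (and if $|S|=k$ one removes a redundant vertex, which exists unless the instance is trivially NO); and the paper exploits the extra fact that in the \cite{MouawadN0SS17} instances length-$\ell$ reachability coincides with reachability and $k=\Theta(\ell)$, which makes the length bookkeeping after padding painless, whereas you assert the $O(\ell)$ bound on the compressed TJ sequence directly---this is also fine, since the rearrangement that eliminates consecutive removals never increases the TAR length.
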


\begin{proof}
To obtain hardness under the TJ rule
we use an argument similar to that of Theorem 1 of \cite{KaminskiMM12}. Suppose
we are given an instance of $k$-\textsc{Dominating Set Reconfiguration}
$G=(V,E)$ and $S,T\subseteq V$ where $k$ is the maximum size of any dominating
set allowed and we use the TAR rule, that is, an instance produced by the
reduction establishing Theorem \ref{thm:dshard1}. We recall that in the
instances produced for this reduction we have $k=\Theta(\ell)$ and that $S$ can
be transformed into $T$ with $\ell$ TAR moves if and only if $S$ can be
transformed into $T$ with some number of TAR moves (in other words, if $\ell$
moves are not sufficient, then $S$ and $T$ are in fact unreachable). This
observation will be useful because it means that in the reduction that follows
we do not have to preserve $\ell$ exactly but only guarantee that it increases
by at most a constant factor.

We can assume without loss of generality that $|S|=|T|=k-1$: if $|S|<k-1$ we
can add to $S$ arbitrary vertices to make its size $k-1$, while if $|S|=k$ then
$S$ cannot be a minimal dominating set (otherwise it would be impossible to
transform it to any other set and we would have an obvious NO instance) so
there is a vertex that we can remove from $S$ without affecting the answer. In
both cases we appropriately increase $\ell$ by the number of modifications we
made to $S,T$ to preserve reachability. We want to show that the instance is
now equivalent under the TJ rule.  In particular, there exists a TAR
reconfiguration with $2\ell$ moves if there exists a TJ reconfiguration with
$\ell$ moves.

First, if there exists a TJ reconfiguration from $S$ to $T$ then there exists a
TAR reconfiguration from $S$ to $T$: for each move that exchanges $u\in S$ with
$v\not\in S$ we first add $v$ to $S$ and then remove $u$.

For the converse direction, suppose that there is a TAR reconfiguration of $S$
to $T$. If moves alternate in this reconfiguration, that is, if all
intermediate sets have size between $k-2$ and $k$, then it is not hard to see
how to perform the same reconfiguration with TJ moves.  Suppose then that the
reconfiguration performs two consecutive vertex removal moves, so we have the
dominating sets $T_i, T_{i+1}, T_{i+2}$ appearing consecutively in the
reconfiguration sequence, with $|T_i|=|T_{i+1}|+1=|T_{i+2}|+2$.  Let $j$ be the
smallest index with $j>i+2$ such that $|T_j|>|T_{j-1}|$ (i.e.  $j$ signifies
the first time we added a vertex after the $i$-th move). Let $T_i\setminus
T_{i+1}=\{u\}$ and $T_j\setminus T_{j-1}=\{v\}$. Then, if $u=v$ we can add $u$
to all sets $T_{i+1},\ldots,T_{j-1}$ and obtain a shorter reconfiguration
sequence (since now $T_i=T_{i+1}$ and $T_j=T_{j-1}$). Similarly, if $u\neq v$
and $v\in T_{i+1}$ we add $v$ to all sets $T_{i+2},\ldots,T_{j-1}$ to which it
doesn't appear and we have a shorter reconfiguration sequence. Finally, if
$u\neq v$ and $v\not\in T_{i+1}$, we insert after $T_{i+1}$ the set
$T_{i+1}\cup\{v\}$ and then add $v$ to all sets $T_{i+2},\ldots,T_{j-1}$.  We
now have $T_{j-1}=T_j$, so we have a valid TAR reconfiguration of the same
length but with one less pair of consecutive vertex removals. Repeating this
argument produces a TAR reconfiguration which can be performed with TJ moves.

For the ETH-based lower bound it suffices to recall that, under the ETH
$t$-\textsc{Dominating Set} does not admit an $n^{o(t)}$ algorithm
\cite{CyganFKLMPPS15}, and that the reduction establishing Theorem
\ref{thm:dshard1} in \cite{MouawadN0SS17} is a reduction from
$t$-\textsc{Dominating Set} that sets $k,\ell=O(t)$.  \end{proof}

\begin{theorem}\label{thm:whard} The $c$-\textsc{Colorable Reconfiguration}
problem is W[2]-hard parameterized by $c$ and the reconfiguration length $\ell$
when restricted to split graphs under any of the three reconfiguration rules
(TAR, TJ, TS).  Furthermore, under the ETH, the same problem does not admit an
$n^{o(c+\ell)}$ algorithm.  \end{theorem}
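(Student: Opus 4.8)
The plan is to reduce from \textsc{Dominating Set Reconfiguration}, invoking Corollary~\ref{cor:dshard2}. That corollary (more precisely, the instances built in its proof) gives W[2]-hardness parameterized by $k+\ell$ and rules out $n^{o(k+\ell)}$ under the ETH, and under the TJ rule it lets us assume that the two given dominating sets --- and therefore every dominating set along any reconfiguration sequence --- have size exactly $k-1$. I will use this TJ-variant as the starting point. Given such an instance $(G,S,T)$ with $n:=|V(G)|$, I build a split graph $G'=(K\cup I,E')$ with clique side $K=\{\hat v:v\in V(G)\}$ and independent side $I=\{y_w:w\in V(G)\}$, and put an edge $\hat v y_w$ exactly when $v\notin N_G[w]$ (so $y_w$ sees the clique vertices of the \emph{non-}dominators of $w$). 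I set $c:=k-1$, $S':=\{\hat v:v\in S\}\cup I$ and $T':=\{\hat v:v\in T\}\cup I$; both have size $n+k-1$.

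The heart of the argument is a size/colorability invariant. For any $R\subseteq V(G')$, since $R\cap I$ is independent, $R$ is $c$-colorable iff $|R\cap K|\le c$ and $|R\cap K\cap N(y_w)|\le c-1$ for every $y_w\in R$; in particular any $c$-colorable $R$ has $|R\cap K|\le k-1$, so if $|R|=n+k-1$ then $R\cap I=I$ and $|R\cap K|=k-1$. Moreover, for such an $R$, writing $D:=\{v:\hat v\in R\}$, the requirement $|R\cap K\cap N(y_w)|=|D\setminus N_G[w]|\le c-1=k-2$ --- which must hold for every $w$, as $R\cap I=I$ --- is equivalent to $|D\cap N_G[w]|\ge1$, i.e.\ to $D$ being a dominating set. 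Hence full-size $c$-colorable sets of $G'$ correspond bijectively to dominating sets of $G$ of size $k-1$. For the forward direction, a TJ sequence $S=D_0,\dots,D_\ell=T$ maps to $D'_i:=\{\hat v:v\in D_i\}\cup I$; each $D'_i$ is $c$-colorable by the invariant, and $D'_i,D'_{i+1}$ differ by relocating one token inside the clique $K$, which is a legal TJ move, a legal TS move (as $K$ is a clique), and can be realized by two TAR moves (remove then re-add inside $K$, with the TAR threshold set to $n+k-2$ so the intermediate clique-size-$(k-2)$ set is admissible, which it always is since $|D_i\setminus\{v\}|=k-2$). For the backward direction, every configuration reachable from $S'$ keeps size $n+k-1$, hence by the invariant keeps $I$ fully occupied and its clique part a dominating set of size $k-1$; the only moves compatible with this are relocations of a token inside $K$ (moving a token out of $I$, or into $K$, immediately violates either the size count or $|R\cap K|\le c$), so the projection $R\mapsto\{v:\hat v\in R\}$ yields a valid TJ reconfiguration of dominating sets of size $k-1$. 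For TAR the same conclusion holds after a short case analysis: with threshold $n+k-2$ all sizes stay in $\{n+k-2,n+k-1\}$ (since $|R|\le|R\cap K|+|I|\le c+n$), and one checks that any detour through a clique-size-$(k-2)$ set, or through dropping a single $y_w$, can only relocate tokens inside $K$ or be immediately undone.

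Finally, $c=k-1$ and the reconfiguration length grows by at most a factor $2$, so this is a parameterized reduction for the parameter $c+\ell$ that is linear in $k+\ell$, and $|V(G')|=2n$; hence both the W[2]-hardness and (via the ETH-tightness of Corollary~\ref{cor:dshard2}) the $n^{o(c+\ell)}$ lower bound transfer, for all three rules. I expect the only real obstacle to be ruling out ``cheating'' moves that use the $K$--$I$ edges of $G'$ to shuffle tokens between the two sides; the size-and-colorability invariant is exactly what forbids this, and most of the write-up will consist of verifying it carefully for each of the three rules --- in particular pinning down the TAR threshold so that no spurious reconfiguration sequences arise.
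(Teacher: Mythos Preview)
Your proposal is correct and follows essentially the same route as the paper: the split graph $G'$ you build (clique side indexed by $V(G)$, independent side indexed by $V(G)$, with $\hat v y_w\in E'$ iff $v\notin N_G[w]$) is exactly the paper's construction, and your size/colorability invariant is precisely the paper's one-to-one correspondence between size-$(k{-}1)$ dominating sets of $G$ and full-size $c$-colorable sets of $G'$. The only cosmetic difference is that you set $c=k-1$ (keeping the normalized dominating-set size from Corollary~\ref{cor:dshard2}), whereas the paper relabels and uses $c=k$; your TAR case analysis (splitting the size-$(n{+}k{-}2)$ configurations into ``clique shrinks by one'' versus ``one $y_w$ dropped, hence immediately undone'') is in fact slightly more explicit than the paper's treatment.
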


\begin{proof}

We use a reduction from \textsc{Dominating Set Reconfiguration} similar to the
one used in \cite{ItoO18} to prove that our problem is PSPACE-complete if $c$
is part of the input.  Let $G=(V,E)$ be an input graph for \textsc{Dominating
Set Reconfiguration}.  We construct a split graph $G'$ as follows: we take two
copies of $V$, call them $V_1,V_2$; we turn $V_1$ into a clique; for each $u\in
V_1$ and $v\in V_2$ we add the edge $(u,v)$ if and only if $u\not\in N[v]$ in
$G$. In other words, we connect each vertex from $V_1$ with all the vertices of
$V_2$ which it \emph{does not} dominate in $G$.

We assume now that we have started with $k$-\textsc{Dominating Set
Reconfiguration} instance under the TJ rule, which is W[2]-hard according to
Corollary \ref{cor:dshard2} parameterized by $k+\ell$. We will first show
hardness of $c$-\textsc{Colorable Reconfiguration} for TJ and TS parameterized
by $c+\ell$.

We construct a one-to-one correspondence between size $k$ dominating sets of
$G$ and $k$-colorable sets of vertices of $G'$ of size $n+k$, where $n=|V|$:
for each such set $S\subseteq V$ we define its image $\phi(S)$ in $G'$ as
$\{u\in V_1\ |\ u\in S\}\cup V_2$. In other words, we select all the vertices
of $S$ from $V_1$ and all of $V_2$. It is not hard to see that $\phi(S)$ is
indeed $k$-colorable: if not, there exists a clique of size $k+1$ in $G'[S']$
(since split graphs are perfect), which must consist of the $k$ vertices of $S$
from $V_1$, plus a vertex $v$ from $V_2$. But $v$ must be dominated by a vertex
$u\in S$ in $G$, which means that $v$ and the copy of $u$ in $V_1$ are not
connected.

Let us also observe that for every $k$-colorable set $S'$ of size $n+k$ in $G'$
we have that $S'=\phi(S)$ for some dominating set $S$ of size $k$ in $G$. To
see this, observe that $S'$ must contain exactly $k$ vertices of $V_1$ (since
it is $k$-colorable, $V_1$ is a clique, and $|V_2|=n$). These vertices must be
a dominating set of $G$ as otherwise there would exist a vertex $v$ that is not
in any of their closed neighborhoods, and the copy of $v$ in $V_2$ together
with $S'\cap V_1$ would form a clique of size $k+1$, contradicting the
$k$-colorability of $S'$.

Given the above correspondence it is not hard to complete the reduction: if we
are given two dominating sets $S,T\subseteq V$ with the initial instance we set
$\phi(S),\phi(T)$ as the two $k$-colorable graphs of the new instance. We
observe that any valid TJ move that transforms a dominating set $T_i$ to a
dominating set $T_{i+1}$ in $G$, corresponds to a TJ move that transforms
$\phi(T_i)$ to $\phi(T_{i+1})$ in $G'$. Crucially, such a move is also a TS
move, as the symmetric difference of $T_{i}$ and $T_{i+1}$ is contained in the
clique. Hence, there is also a one-to-one correspondence between TJ
$k$-dominating set reconfigurations in $G$ and TS $k$-colorable subgraph (of
size $n+k$) reconfiguration in $G'$. We therefore set the length of the desired
reconfiguration sequence in $G'$ to $\ell$.

Finally, to obtain hardness of the new instance under the TAR rule we set the
lower bound on the size of any intermediate set to $n+k-1$. Since
$|\phi(S)|=|\phi(T)|= n+k$ this means that any TJ $c$-colorable reconfiguration
can also be performed with at most $2\ell$ TAR moves. For the converse
direction we observe that in any TAR reconfiguration we never have a set of
size $n+k+1$ or more, since such a set would necessarily induce a graph that
needs $k+1$ colors. Hence, such a reconfiguration must consist of alternating
vertex removal and addition moves, which can be performed with $\ell$ TJ moves.

The ETH-based lower bounds follow from Corollary \ref{cor:dshard2} and the fact
that the reduction we performed is at most linear in all parameters.
\end{proof}




\bibliography{reconfiguration}

%
%
%

\end{document}